\documentclass[11pt]{article}
\usepackage{amsmath}
\usepackage{amssymb}
\usepackage{amsfonts}
\usepackage{array}
\usepackage{float}
\usepackage{enumerate}
\usepackage{a4wide}
\usepackage{ifthen}
\usepackage{xspace}
\usepackage{graphicx}
\usepackage{arydshln}
\usepackage{subfigure}

\usepackage{algorithmic}
\usepackage{algorithm}

\floatstyle{ruled}
\newfloat{algorithm}{thp}{lop}[section]
\floatname{algorithm}{Algorithm}
\newfloat{module}{thp}{lop}[section]
\floatname{module}{Module}

\newtheorem{thm}{Theorem}[section]

\newtheorem{lem}[thm]{Lemma}
\newtheorem{definition}{Definition}[section]

\pagenumbering{arabic}

\newenvironment{proof}{\emph{Proof}:}{\hfill$\Box$}

\newcommand{\remove}[1]{}

\newenvironment{th-repeat}[1]{\begin{trivlist}
\item[\hspace{\labelsep}{\bf\noindent Theorem~\ref{#1} }]}%
{\end{trivlist}}

\newenvironment{lem-repeat}[1]{\begin{trivlist}
\item[\hspace{\labelsep}{\bf\noindent Lemma~\ref{#1} }]}%
{\end{trivlist}}

\begin{document}


\title{Byzantine Convergence in Robots Networks:\\ The Price of Asynchrony} 

\author{Zohir Bouzid \and Maria Gradinariu \and S\'{e}bastien Tixeuil}

\date{Universit\'e Pierre et Marie Curie - Paris 6, LIP6-CNRS 7606, France\\
\texttt{FirstName.LastName@lip6.fr}}

\maketitle

\begin{abstract}
We study the convergence problem in fully asynchronous, uni-di\-men\-sio\-nal robot networks that are prone to Byzantine (\emph{i.e.} malicious) failures.
In these settings, oblivious anonymous robots with arbitrary initial positions are required to eventually 
converge to an \emph{a apriori} unknown position despite a subset of them exhibiting Byzantine behavior.  
Our contribution is twofold. We propose a deterministic algorithm that solves the problem in 
the most generic settings: fully asynchronous robots that operate in the non-atomic CORDA model.
Our algorithm provides convergence in $5f+1$-sized networks where $f$ is the upper bound 
on the number of Byzantine robots. Additionally, we prove that $5f+1$ is a lower bound whenever robot scheduling is fully asynchronous. 
This constrasts with previous results in partially synchronous robots networks, where $3f+1$ robots are necessary and sufficient.

\textbf{Keywords:} Robots networks, Byzantine tolerance, Asynchronous systems, Convergence.

\end{abstract}
\section{Introduction} 
The use of cooperative swarms of weak inexpensive robots for achieving 
complex tasks such as exploration or tracking in dangerous environments is
a promising option for reducing both human and material costs. 
Robot networks recently became a challenging research area 
for distributed systems since most of the problems to be solved in
this context (\emph{e.g.} coordination, agreement, resource allocation or 
leader election) form the core of distributed computing. However,
the classical distributed computing solutions do not translate well due to
fundamentally different execution models.

In order to capture the essence of distributed coordination in robot
networks, two main computational models are proposed in the
literature: the ATOM~\cite{SY99} and CORDA~\cite{Pre05} models. The
main difference between the two models comes from the granularity for
executing a \emph{Look-Compute-Move} cycle. In such a cycle, the Look phase
consists in taking a snapshot of the other robots positions using its
visibility sensors. In the Compute phase a robot computes a target
destination based on its previous observation. The Move phase simply
consists in moving toward the computed destination using motion
actuators. In the ATOM model, the whole cycle is atomic while in the CORDA model, the cycle is executed in a continuous manner. That is, in the ATOM model, robots
executing concurrently always remain in the same phase while in CORDA it is possible that \emph{e.g.} a robot executes its Look phase while another robot performs its
Move phase, or that a robot executes its Compute phase while its view (obtained during the Look phase) is already outdated. Of course, executions that may appear in the CORDA model are a strict superset of those that may appear in the ATOM model, so a protocol that performs in the CORDA model also works in the ATOM model, but the converse is not true. Similarly, impossibility results for the ATOM model still hold in the CORDA model. Complementary to the granularity of robots action is the amount of \emph{asynchrony} in the system, that is modeled by the scheduler: \emph{(i)} a \emph{fully synchronous} scheduler operates all robots in a lock-step manner forever, while \emph{(ii)} a $k$-bounded scheduler preserves a ratio of $k$ between the most often activated robot and the least often activated robot, finally \emph{(iii)} a \emph{fully asynchronous} scheduler only guarantees that every robots is activated infinitely often in an infinite execution.
The robots that we consider have weak capacities: they are \emph{anonymous} (they execute the same protocol and have no mean to distinguish themselves from the others), \emph{oblivious} (they have no memory that is persistent between two cycles), and have no compass whatsoever (they are unable to agree on a common direction or orientation).

\emph{Convergence} is a fundamental agreement primitive in robot networks and is used in the implementation 
of a broad class of services (\emph{e.g.} the construction of common coordinate systems or specific geometrical patterns). 
Given a set of oblivious robots with arbitrary initial locations and
no agreement on a global coordinate system, \emph{convergence}
requires that all robots asymptotically approach the same, but unknown
beforehand, location. Convergence looks similar to distributed approximate agreement since both problems require nodes to agree on
a common object (that is instantiated to be a position in space for the case of convergence, or a value in the case of
distributed agreement). 



\paragraph{Related works}

Since the pioneering work of Suzuki and Yamashita~\cite{SY99}, gathering\footnote{Gathering requires  robots to actually \emph{reach} a single point within finite time regardless of their initial positions.} and convergence have been addressed in \emph{fault-free} systems for a broad class of settings. Prencipe~\cite{Pre05} studied the problem of gathering in both ATOM and CORDA models, and showed that the problem is intractable without additional assumptions such as being able to detect the multiplicity of a location (\emph{i.e.}, knowing if there is more than one robot in a given location). 

The case of \emph{fault-prone} robot networks was recently tackled by several academic studies. The faults that have been investigated fall in two categories: \emph{crash} faults (\emph{i.e.} a faulty robots stops executing its cycle forever) and \emph{Byzantine} faults (\emph{i.e.} a faulty robot may exhibit arbitrary behavior and movement). Of course, the Byzantine fault model encompasses the crash fault model, and is thus harder to address.
\emph{Deterministic} fault-tolerant gathering is addressed in~\cite{agmon2004ftg} where the authors study a gathering protocol that tolerates one crash, and an algorithm for the ATOM model with fully synchronous scheduling that tolerates up to $f$ byzantine faults, when the number of robots is (strictly) greater than $3f$. In \cite{defago3274fta} the authors study the feasibility of \emph{probabilistic} gathering in crash-prone and Byzantine-prone environments.
\emph{Deterministic} fault-tolerant convergence was first addressed in~\cite{cohen2004rcv,cohen2005cpg}, where algorithms based on convergence to the center of gravity of the system are presented. Those algorithms work in the ATOM~\cite{cohen2004rcv} and CORDA~\cite{cohen2005cpg} models with a fully asynchronous scheduler and tolerate up to $f$ ($n>f$) crash faults, where $n$ is the number of robots in the system. Most related to this paper are~\cite{BGT09c,BGT09d}, where the authors studied convergence in byzantine-prone environments when robots move in a uni-dimensional space. In more details, \cite{BGT09c} showed that convergence is impossible if robots are not endowed with strong multiplicity detectors which are able to detect the exact number of robots that may simultaneously share the same location. The same paper defines the class of \emph{cautious} algorithms which guarantee that correct robots always move inside the range of positions held by correct robots, and proved that any cautious convergence algorithm that can tolerate $f$ Byzantine robots requires the presence of at least $2f+1$ robots in fully-synchronous ATOM networks and $3f+1$ robots in $k$-bounded (and thus also in fully asynchronous) ATOM networks. The lower bound for the ATOM model naturally extends to the CORDA model, and~\cite{BGT09d} provides a matching upper bound in the $k$-bounded CORDA model.

Interestingly enough, all previously known deterministic Byzantine tolerant robot protocols assume either the more restrictive ATOM model~\cite{defago3274fta}, or the constrained fully synchronous~\cite{agmon2004ftg} or $k$-bounded~\cite{BGT09c,BGT09d} schedulers, thus the question of the existence of such protocols in a fully asynchronous CORDA model remains open.

\paragraph{Our contribution}
We present the first study of Byzantine resilient robot protocols that considers the most general execution model: the CORDA model together with the fully asynchronous scheduler. We concentrate on the convergence problem and prove that the fully asynchronous scheduler implies a lower bound of $5f+1$ for the number $n$ of robots for the class of cautious protocols (this bound holds for both ATOM and CORDA models). We also exhibit a deterministic protocol that matches this lower bound (that is, provided that $n\geq 5f+1$, our protocol is determinstic and performs in the CORDA model with fully asynchronous scheduling). Table~\ref{tab:results} summarizes the characteristics of our protocol with respect to previous work on Byzantine tolerant robot convergence (better characteristics for a protocol are depicted in boldface).

\begin{table}
\centering
\begin{tabular}{|c|l|l|l|}\hline
\textbf{Reference} & \textbf{Computation Model} & \textbf{Scheduler}& \textbf{Bounds} \\\hline
\cite{agmon2004ftg} & ATOM & fully synchronous& $n>3f$ \\\hline
\cite{BGT09c} & ATOM & fully synchronous & $n>2f$ \\
& ATOM & $k$-bounded & $n>3f$\\
& \textbf{CORDA} & $k$-bounded & $n>4f$\\\hline
\cite{BGT09d} & \textbf{CORDA} & $k$-bounded & $n>3f$\\\hline
\textbf{This paper} & \textbf{CORDA} & \textbf{fully asynchronous}& $n>5f$\\\hline
\end{tabular}
\caption{Byzantine resilience bounds for deterministic convergence}
\label{tab:results}
\end{table}

\paragraph{Outline}
The remaining of the paper is organized as follows:
Section~\ref{sec:model} presents our model and robot network
assumptions. This section also presents the formal specification of
the convergence problem. Section \ref{sec:lowerbound} presents the byzantine
resilience lower bound
proof. Section~\ref{sec:algorithm} describes our
protocol and its complexity, 
while concluding remarks are presented in Section~\ref{sec:conclusion}.

\section{Model and Problem Definition}
\label{sec:model}

Most of the notions presented in this section are borrowed from\cite{SY99,Pre01,agmon2004ftg}.  We consider a network that consists of a finite set of robots arbitrarily deployed in a uni-dimensional space. The robots are devices with sensing, computing and moving capabilities. They can observe (sense) the positions of other robots in the space and based on these observations, they perform some local computations that can drive them to other locations. 

In the context of this paper, the robots are \emph{anonymous}, in the sense that they can not be distinguished using their appearance, and they do not have any kind of identifiers that can be used during the computation. In addition, there is no direct mean of communication between them. Hence, the only way for robots to acquire information is by observing their positions. Robots have \emph{unlimited visibility}, \emph{i.e.} they are able to sense the entire set of robots. Robots are also equipped with a strong multiplicity sensor that provides robots with the ability to detect the exact number of robots that may simultaneously occupy the same location. We assume that the robots cannot remember any previous observation nor computation performed in any previous step. Such robots are said to be \emph{oblivious} (or \emph{memoryless}). 

A \emph{protocol} is a collection of $n$ \emph{programs}, one operating on each robot. The program of a robot consists in executing {\em Look\mbox{-}Compute\mbox{-}Move cycles} infinitely many times. That is, the robot first observes its environment (Look phase). An observation returns a snapshot of the positions of all robots within the visibility range. In our case, this observation returns a snapshot (also called \emph{configuration} hereafter) of the positions of \emph{all} robots denoted with $P(t)= \{P_1(t), ... , P_n(t)\}$. The positions of correct robots are referred as $U(t)=\{U_1(t), ... , U_m(t)\}$ where $m$ denotes the number of correct robots. Note that $U(t) \subseteq P(t)$. The observed positions are \emph{relative} to the observing robot, that is, they use the coordinate system of the observing robot. We denote by $P^i(t)= \{P_1^i(t), ... , P_n^i(t)\}$ the configuration $P(t)$ given in terms of the coordinate system of robot $i$ ($U^i(t)$ is defined similarly).
Based on its observation, a robot then decides --- according to its program --- to move or to stay idle (Compute phase). When a robot decides a move, it moves to its destination during the Move phase.  An \emph{execution} $e=(c_0, \ldots, c_t, \ldots)$ of the system is an infinite sequence of configurations, where $c_0$ is the initial configuration\footnote{Unless stated otherwise, we make no specific assumption regarding the respective positions of robots in initial configurations.} of the system, and every transition $c_i \rightarrow c_{i+1}$ is associated to the execution of a subset of the previously defined actions. 

A \emph{scheduler} is a predicate on computations, that is, a scheduler defines a set of \emph{admissible} computations, such that every computation in this set satisfies the scheduler predicate. A \emph{scheduler} can be seen as an entity that is external to the system and selects robots for execution. As more power is given to the scheduler for robot scheduling, more different executions are possible and more difficult it becomes to design robot algorithms. In the remaining of the paper, we consider that the scheduler is \emph{fully asynchronous}, that is, in any infinite execution, every robot is activated infinitely often, but there is no bound for the ration between the most activated robot and the least activated one.

We now review the main differences between the ATOM~\cite{SY99} and CORDA~\cite{Pre01} models. In the ATOM model, whenever a robot is activated by the scheduler, it performs a \emph{full} computation cycle. Thus, the execution of the system can be viewed as an infinite sequence of rounds. In a round one or more robots are activated by the scheduler and perform a computation cycle. The \emph{fully-synchronous ATOM} model refers to the fact that the scheduler activates all robots in each round, while the regular \emph{ATOM} model enables the scheduler to activate only a subset of the robots.
In the CORDA model, robots may be interrupted by the scheduler after performing only a portion of a computation cycle. In particular, phases (Look, Compute, Move) of different robots may be interleaved. For example, a robot $a$ may perform a Look phase, then a robot $b$ performs a Look-Compute-Move complete cycle, then $a$ computes and moves based on its previous observation (that does not correspond to the current configuration anymore). As a result, the set of executions that are possible in the CORDA model are a strict superset of those that are possible in the ATOM model. So, an impossibility result that holds in the ATOM model also holds in the CORDA model, while an algorithm that performs in the CORDA model is also correct in the ATOM model. Note that the converse is not necessarily true.
 
The faults we address in this paper are \emph{Byzantine} faults. A byzantine (or malicious) robot may behave in arbitrary and unforeseeable way. In each cycle, the scheduler determines the course of action of faulty robots and the distance to which each non-faulty robot will move in this cycle. However, a robot $i$ is guaranteed to move a distance of at least $\delta_i$ towards its destination before it can be stopped by the scheduler.

Our convergence algorithm performs operations on multisets. A multiset or a bag $S$ is a generalization of a set where an element can have more than one occurrence. The number of occurrences of an element \emph{a} is referred as its \emph{multiplicity}. The total number of elements of a multiset, including their repeated occurrences, is referred as the \emph{cardinality} and is denoted by $|S|$. $\min(S)$(resp. $\max(S)$) is the smallest (resp. largest) element of $S$. If $S$ is nonempty, $range(S)$  denotes the set $[\min(S), \max(S)]$ and $diam(S)$ (diameter of $S$) denotes $\max(S) - \min(S)$.


Given an initial configuration of $n$ autonomous mobile robots ($m$ of which are correct such that $m \geq n-f$), the \emph{point convergence problem} requires that all correct robots asymptotically approach the exact same, but unknown beforehand, location. In other words, for every $\epsilon > 0$, there is a time $t_\epsilon$ from which all correct robots are within distance of at most $\epsilon$ of each other.

\begin{definition}[Byzantine Convergence]
\label{def:byz-convergence}
A system of oblivious robots satisfies the Byzantine convergence specification if and only if $\forall \epsilon > 0, \exists t_\epsilon$ such that $\forall t > t_\epsilon$, $\forall i,j \leq m$, $\mathit{distance}(U_i(t), U_j(t)) < \epsilon$, where $U_i(t)$ and $U_j(t)$ are the positions of some \emph{correct} robots $i$ and $j$ at time $t$, and where $\mathit{distance}(a,b)$ denote the Euclidian distance between two positions. 
\end{definition}

Definition~\ref{def:byz-convergence} requires the convergence property only from the \emph{correct} robots. Note that it is impossible to obtain the convergence for all robots since Byzantine robots may exhibit arbitrary behavior and never join the position of correct robots.

\section{Impossibility for $n \leq 5f$ and a fully asynchronous scheduler}
\label{sec:lowerbound}

In this section we prove the fact that, when the number of robots in the network does not exceed $5f$ (with $f$ of those robots possibly being Byzantine), the problem of Byzantine resilient convergence is impossible to solve under a fully asynchronous scheduler. The result is proved for the weaker ATOM model, and thus extends to the CORDA model.

Our proof is based on a particular initial setting from which we prove that no cautious convergence algorithm is possible if the activation of robots is handled by a fully asynchronous scheduler. Consider a network $N$ of $n$ robots placed on a line segment $[A,B]$, $f$ of which may be Byzantine with $n \leq 5f$. We consider that robots are ordered from left to right. This order is only given for ease of presentation of the proof and is unknown to robots that can not use it in their algorithms. It was proved in \cite{BGT09c} that the problem is impossible to solve when $n\leq 3f$, we thus consider here the case when $3f <n \leq 5f$ only. The initial placement of correct robots is illustrated in Figure~\ref{fig-configuration-c1}: $f$ robots are at location $A$, $f$ others robots are at location $B$ and the remaining $m-2f$ ones are located at some intermediate location between $A$ and $B$. The impossibility proof depends on the ability of the adversary to move these $m-2f$ robots along $[A,B]$, so their position is denoted by a variable $X$, with $X$ belonging to interval $(A, B)$. In the following, these three groups of robots located at $A$, $B$ and $X$ will be referred as $SetA$, $SetB$ and $SetX$ respectively. The positions of the Byzantine robots are determined by the adversary.

\begin{figure}[htbp]
\centering
\includegraphics[height=2cm,width=5cm]{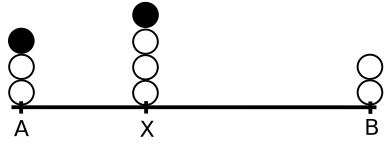}
\caption{Robot Network $N$ (Configuration $C_1$) for $(n=9, f=2)$}
\label{fig-configuration-c1}
\end{figure}

We show by contradiction that in these conditions, no cautious convergence algorithm is possible. Assume that there exists a cautious convergence algorithm $P$ that is correct when the robots are activated by a fully asynchronous scheduler, then we show that in this setting, any cautious algorithm $P$ satisfies properties that can by used by the adversary to prevent convergence of $P$, which is a contradiction.

The properties satisfied by all cautious protocols are captured in the following two basic facts:

\begin{itemize}
\item \textbf{Fact 1:} If all \textit{Byzantine} robots are inside $[A, X]$ (resp. $[X, B]$) then when robots of $SetA$ (resp. $SetB$) are activated, their calculated destination points are necessarily inside $[A, X]$ (resp. $[X, B]$). This fact is proved by Lemma \ref{lemma_fact1}.

\item \textbf{Fact 2:} The adversary is able to move the robots of $SetX$ as close as desired to location $A$ (resp. $B$). This is proved by Lemmas~\ref{lemma_fact2_even}, \ref{lemma_fact2_odd} and \ref{lemma-fact2}.
\end{itemize}

Based on this, the adversary first moves the robots of $SetX$ very close to $A$ (using Fact 2) and then activates the robots of $SetA$ that remain in the neighborhood of $A$ (due to Fact 1). Afterward, it moves the intermediate robots of $SetX$ very close to $B$ (using Fact 2) and activates the robots of $SetB$ which also remain in the neighborhood of $B$ (due to Fact 1). By repeating these actions indefinitely, the adversary ensures that every robot is activated infinitely often in the execution yet prevents convergence at the same time since robots at $A$ and $B$ remain always arbitrarily close to their initial positions and never converge.

\medskip

In the following, we prove $Fact 1$ and $Fact 2$ by a sequence of lemmas, and then give a formal presentation of the algorithm used by the adversary to prevent any cautious protocol from achieving convergence.

\begin{lem}
\label{lemma_fact1}
In $N$, $\forall X \in (A, B)$, 
if all Byzantine robots are inside $[A, X]$ (resp. $[X, B]$) then when robots of $SetA$ (resp. $SetB$) are activated, 
their destination points computed by any cautious algorithm are necessarily inside $[A, X]$ (resp. $[X, B]$).
\end{lem}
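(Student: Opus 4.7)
The plan is to exploit an indistinguishability argument based on the cautious property. A cautious algorithm must, by definition, output a destination that lies within the range of \emph{correct} positions; but the activated robot sees only an anonymous snapshot and cannot tell correct positions from Byzantine ones. Consequently, its destination must lie in the range of correct positions for \emph{every} assignment of $f$ Byzantine labels that is consistent with the observed snapshot.

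First, I would fix the actual configuration: the $f$ Byzantine robots are placed inside $[A,X]$, while the correct robots occupy $A$ ($f$ of them), $X$ ($n-3f\geq 1$ of them, since we work in the regime $3f<n\le 5f$), and $B$ ($f$ of them). Let $r\in SetA$ be the activated robot, and let $d$ be the destination it computes. Since the protocol is deterministic and anonymous, $d$ is a function of the snapshot alone.

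Second, I would introduce an \emph{alternative} Byzantine labeling consistent with the same snapshot: relabel the $f$ robots sitting at $B$ as Byzantine, and relabel the $f$ robots sitting in $[A,X]$ (originally Byzantine) as correct. This is a valid configuration because it still has exactly $f$ Byzantine robots and exactly $m=n-f$ correct ones, and it yields the same snapshot as the actual configuration. Under this alternative labeling, the set of correct positions consists of $f$ points at $A$, $n-3f$ points at $X$, and $f$ points lying in $[A,X]$; its range is therefore exactly $[A,X]$. The cautious property applied to this indistinguishable execution forces $d\in[A,X]$, which is what we wanted.

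The case of $SetB$ is obtained by the symmetric argument: with all Byzantine robots inside $[X,B]$, relabel the $f$ robots at $A$ as Byzantine and the $f$ robots in $[X,B]$ as correct; the alternative range collapses to $[X,B]$, and cautiousness forces any destination computed by a robot of $SetB$ to lie in $[X,B]$.

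The only delicate point is the indistinguishability step, i.e., convincing the reader that (i) the determinism of the protocol makes $d$ a function of the snapshot only, and (ii) the alternative labeling uses exactly $f$ Byzantine robots and is therefore a legitimate execution to which the cautious guarantee must apply; both facts follow directly from the model and from the hypothesis $3f<n\le 5f$ (which ensures $SetX$ is non-empty and that the alternative labeling is well-defined).
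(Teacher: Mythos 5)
Your proof is correct and follows essentially the same route as the paper: both construct an indistinguishable relabeling in which the $f$ robots at $B$ become Byzantine and the formerly Byzantine robots in $[A,X]$ become correct, so that the range of correct positions collapses to $[A,X]$ and cautiousness forces the computed destination into that interval. Your write-up is if anything slightly more explicit about why the alternative labeling is legitimate (exactly $f$ Byzantine robots, same snapshot), which the paper leaves implicit.
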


To prove $Fact 2$, we use the network $N$ described above (see Figure~\ref{fig-configuration-c1}). We prove only the capability of the adversary to move the intermediate robots at $X$ as close as wanted to $B$, the other case being symmetric. $Fact 2$ implies that if the number of robots in the network is lower or equal to $5f$ then it always exists a judicious placement of the Byzantine robots that permits the adversary to make the intermediate robots in $X$ move in the direction of $B$ up to a location that is as close as desired to $B$. We divide the analysis in two cases depending on the parity of $(n-f)$.

\paragraph{Case 1: $(n-f)$ is even}

To push the robots of $SetX$ as close to $A$ or $B$ as wanted, the adversary uses algorithm $GoToBorder1$ $(G2B1)$ described as Algorithm~\ref{alg-gotoborder-even}. Informally, the algorithm divides Byzantines robots between position $X$ and the target border to which the adversary wants to push the robots of $SetX$ (\emph{e.g.} $B$ in what follows). The aim of the adversary is to maintain the same number of robots in $X$ and $B$ (this is possible because $n-f$ is even). We prove that in this case, any cautious convergence algorithm makes the robots of $SetX$ move towards $B$. However, the distance traveled by them may be too small to bring them sufficiently close to $B$. Since the scheduler is fully asynchronous, it is authorized to activate the robots of $SetX$ as often as necessary to bring them close to $B$, as long as it does so for a finite number of times.

\begin{algorithm}
\caption{GoToBorder1 (G2B1)}          
\begin{algorithmic}
\small
\STATE   \textbf{Input:} $Border$: the border towards which robots of $SetX$ move (equal to $A$ or $B$).
\STATE   \textbf{Input:} $d$: a distance.\\
\STATE
\STATE   \textbf{Actions:}\\
\WHILE {$distance(X,Border)>d$}
\STATE Place $(n-3f)/2$ byzantine robots at $Border$.
\STATE Place $(5f-n)/2$ byzantine robots at $X$.
\STATE Activate simultaneously all robots of $SetX$ and make them move to their computed destination $D$.
\STATE $X \leftarrow D$
\ENDWHILE
\end{algorithmic}
\label{alg-gotoborder-even}                  
\end{algorithm}

\begin{lem}
\label{lemma_fact2_even}
If $(n-f)$ is even, $\forall d<distance(A,B)$, $\forall Border \in {A,B}$, algorithm $G2B1 (Border, d)$ terminates.
\end{lem}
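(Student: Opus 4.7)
The plan is to proceed in three main steps: identify the snapshot created by $G2B1$, bound the destination of $SetX$ robots via cautiousness, and then argue strict progress and termination.

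First, I verify that the pseudocode is well-defined. Because $(n-f)$ is even and $3f < n \leq 5f$, both $(n-3f)/2$ and $(5f-n)/2$ are nonnegative integers summing to $f$, so the Byzantine placement is feasible. Taking $Border = B$ (the case $Border = A$ is symmetric), after the placement every correct robot of $SetX$ observes the snapshot $\{A : f,\; X : (n-f)/2,\; B : (n-f)/2\}$; the equal multiplicity at $X$ and $B$ is precisely what the parity assumption buys. To bound the destination $D$ of any $SetX$ robot inside $[X, B]$, I invoke an indistinguishability argument: consider the alternative admissible Byzantine assignment that places all $f$ Byzantines at $A$ (admissible since $|SetA| = f$). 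It yields exactly the same snapshot for every $SetX$ robot, yet now the range of correct positions is $[X, B]$. Cautiousness of $P$ therefore forces $D \in [X, B]$.

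Second, I upgrade this to $D > X$ strictly, so that each iteration of the while-loop makes positive progress. The argument combines cautiousness with the convergence of $P$. Suppose for contradiction that $D = X$ at the $G2B1$ snapshot. Applying the same indistinguishability argument at any $SetB$ robot, cautiousness puts its destination also in $[X, B]$; and by the symmetric role played by the $X$ and $B$ clusters in the alternative Byzantines-at-$A$ scenario (both are clusters of size $(n-f)/2$ that together form all the correct positions), the behaviour at $SetB$ must mirror the immobility of $SetX$, so $SetB$ too remains fixed at $B$. But then no correct robot ever moves in this alternative scenario under any fair schedule, contradicting that $P$ is a convergence algorithm. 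Hence $D > X$ strictly.

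Third, I deduce termination of the while-loop. By the strict-progress step, the iterates $X_0 < X_1 < \ldots$ form a monotone sequence bounded above by $B$ and therefore converge to some limit $X^* \leq B$. If $X^* < B$, then applying the strict-progress argument at the limit snapshot $\{A : f,\; X^* : (n-f)/2,\; B : (n-f)/2\}$ would produce an advance past $X^*$, contradicting that $X^*$ is the monotone limit. Thus $X^* = B$, and for every $d < distance(A, B)$ only finitely many iterations can satisfy $distance(X_k, B) > d$, so $G2B1(B, d)$ halts. The case $Border = A$ follows by the obvious mirror argument.

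I expect the main obstacle to be the strict-progress step: cautiousness alone yields only $D \geq X$, so ruling out $D = X$ requires invoking both the symmetric role of the two $(n-f)/2$ clusters and the convergence of $P$ in the alternative Byzantines-at-$A$ scenario. The delicate part is that the snapshots seen by a $SetX$ robot and a $SetB$ robot are only \emph{structurally} symmetric—their multisets of cluster sizes coincide, but their relative positions do not—so the mirror argument must be carefully justified using the common cautious-range constraint $[X,B]$ together with the observation that convergence cannot hold when both clusters are simultaneously frozen.
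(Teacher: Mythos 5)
Your first step (cautiousness plus the Byzantines-at-$A$ relabelling forces $D\in[X,B]$) matches the paper's use of indistinguishability, but the two steps you flag as delicate are exactly where the argument breaks. For strict progress, the ``mirror'' claim is not justifiable: a robot of $SetX$ sees the $f$-cluster at distance $distance(A,X)$ while a robot of $SetB$ sees it at distance $distance(A,B)$, so the two views are not symmetric in any sense that determinism can exploit. A cautious convergence algorithm may perfectly well freeze the cluster nearer to $A$ and delegate all motion to the far cluster; then $D=X$ for $SetX$ yet the all-correct scenario still converges, and no contradiction arises. The paper manufactures the missing symmetry by letting the adversary \emph{relocate} the Byzantine robots to a point $E>B$ with $distance(B,E)=distance(A,X)$ before activating the $B$-cluster, so that the $B$-cluster's view is the exact reflection of the $X$-cluster's view; without that relocation your step 2$\to$3 upgrade has no support.

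The termination step is a second, independent gap: even if $D_k>X_k$ at every iteration, the increments $D_k-X_k$ may tend to $0$ and the monotone sequence may converge to some $X^*<B-d$, in which case the while-loop never halts. ``Applying the strict-progress argument at the limit snapshot'' proves nothing, because the algorithm is never actually executed from the configuration with the cluster at $X^*$; strict progress at $X^*$ does not bound the progress of the iterates approaching it. You would need a uniform lower bound on the per-step advance while $distance(X,B)>d$, which nothing in your argument supplies. The paper avoids both difficulties by arguing globally by contradiction: assume $G2B1$ does not terminate, so the gap between $X$ and $B$ stays above some $d_1$ forever; then the adversary replays this infinite run in the all-correct configuration $C_2$, alternating activations of the two $(n-f)/2$-clusters (with the Byzantines mirrored to $E$ on the odd cycles to preserve indistinguishability and fairness), producing a fair execution of $P$ on correct robots that never converges --- contradicting the assumed correctness of $P$. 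That contradiction needs no per-step progress bound at all, which is precisely what makes it work where your direct argument does not.
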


\paragraph{Case 2: $(n-f)$ is odd}

To prove Lemma~\ref{lemma_fact2_even}, we relied on the symmetry induced by the placement of Byzantine robots. This symmetry is possible only because $(n-f)$ is even. Indeed, having the same number of robots in $B$ and $X$ implies that convergence responsibility is delegated to both robots at $X$ and at $B$ (there is no asymmetry to exploit to get one of these two groups play a role that would be different from the other group. Robots of $SetX$ and $SetB$ have thus no other choice but to move toward each other when they are activated. The distance traveled at each activation must be large enough to ensure the eventual convergence of the algorithm. 

However, the situation is quite different when $(n-f)$ is odd. Indeed, the number of robots is necessarily different in $X$ and $B$, which means that one of the two points has a greater multiplicity than the other. Then in this case there is no guarantee that a cautious convergence algorithm will order the robots of $SetX$ to move toward $B$ when they are activated (the protocol could delegate the convergence responsibility to robots of $SetB$). Nevertheless, we observe that whatever the cautious algorithm is, if it does not move the robots that are located at the greatest point of multiplicity, it must do so for those at the smallest one (and \emph{vice versa}), otherwise no convergence is ever possible. The convergence is thus either under the responsibility of robots at the larger point of multiplicity or those at the smaller one (or both). 

This observation is exploited by Algorithm $GoToBorder2$ $(G2B2)$ that is presented as Algorithm~\ref{alg-gotoborder-odd}, that tries the two possible cases to ensure its proper functioning when confronted to any cautious algorithm. The algorithm forms the larger point of multiplicity at $B$ at one cycle, and the next cycle at $X$. Thus, point $X$ will be the larger point of multiplicity one time, and the smallest one the next time. This implies that the robots of $SetX$ must move towards $B$ at least once every two cycles. So by repeatedly alternating between the two configurations where robots of $SetX$ are successively the set of larger and smaller multiplicity, the adversary ensures that they end up moving towards $B$. The fully asynchrony of the scheduler ensures that they are activated as many times as it takes to move them as close to $B$ as wanted, provided that the algorithm terminates.

\begin{algorithm}
\begin{algorithmic}
\small
\STATE   \textbf{Input:} $Border$: the border towards which the robots of $SetX$ move (equal to $A$ or $B$).
\STATE   \textbf{Input:} $d$: a distance.\\
\STATE

\STATE   \textbf{Actions:}\\

\STATE Place $(n-3f+1)/2$ Byzantine robots at $Border$.
\STATE Place $(5f-n-1)/2$ Byzantine robots at $X$.
\WHILE {$distance(X,Border)>d$}
\STATE Activate simultaneously all robots at $X$ and make them move to their computed destination $D$.
\STATE $X \leftarrow D$.
\STATE Move a Byzantine robot from $Border$ to $X$.
\STATE Activate simultaneously all robots at $X$ and make them move to their computed destination $D$.
\STATE $X \leftarrow D$
\STATE Move a Byzantine robot from $X$ to $Border$.
\ENDWHILE
\end{algorithmic}
\caption{GoToBorder2 (G2B2)}          
\label{alg-gotoborder-odd}                  
\end{algorithm}

\begin{lem}
\label{lemma_fact2_odd}
If $(n-f)$ is odd, $\forall d<distance(A,B)$, $\forall Border \in {A,B}$, algorithm $G2B2 (Border, d)$ terminates.
\end{lem}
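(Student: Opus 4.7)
I proceed by contradiction: assume $G2B2(Border,d)$ does not terminate; by symmetry we may take $Border=B$. Set $k=(n-f-1)/2$. Each iteration of the while loop activates $SetX$ twice, once in a configuration $C_A$ with multiplicities $(f,k,k+1)$ at the occupied positions $(A,X,B)$, and once in a configuration $C_B$ with multiplicities $(f,k+1,k)$. Cautiousness forces both destinations computed by $P$ for $SetX$ to lie in $[A,B]$, and since the adversary lets each robot reach the announced target (the assignment $X\leftarrow D$ in the pseudocode), the position of $SetX$ after any activation equals $P$'s destination.

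The crux of the argument is the following claim: for every $X\in (A,B)$, in at least one of $C_A$ or $C_B$, the destination $P$ assigns to $SetX$ is strictly greater than $X$. I expect this to be the main obstacle of the proof. It formalizes the remark preceding the algorithm: in $C_A$ the position $X$ is the smaller of the two Byzantine-laden multiplicity points $\{X,B\}$, while in $C_B$ it is the larger. A cautious convergence algorithm cannot leave both the smallest- and largest-multiplicity point permanently stationary, since otherwise the correct robots located at those two points would never meet and convergence would fail. Alternating $C_A$ and $C_B$, as $G2B2$ does, thus forces $P$ to assign $SetX$ a strictly forward destination in at least one of the two configurations at every position $X$.

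To prove the claim rigorously one argues by contradiction: suppose both destinations at some $X$ were $\leq X$, and construct a fully asynchronous execution in which the adversary keeps alternating the two Byzantine placements, lets $SetX$ stagnate or drift leftward, and sparsely activates $SetA$ and $SetB$ to satisfy full asynchrony. Lemma~\ref{lemma_fact1} confines $SetB$ to $[X,B]$, and a symmetric application of the observation around $A$ keeps $SetA$ near $A$, leaving $SetA$, $SetX$, and $SetB$ pairwise separated forever and contradicting the convergence of $P$.

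Granted the claim, termination follows. At each iteration, one of the two activations assigns $SetX$ a destination strictly exceeding its current position, and the adversary lets the move complete fully, strictly advancing $SetX$ toward $B$. A monotone-convergence argument then rules out the possibility that the successive positions of $SetX$ accumulate below $B-d$: any such limit $X^*\leq B-d$ would, by the claim applied at $X^*$, admit a strictly forward destination, and the subsequent iterations of the loop would force the position to eventually exceed $X^*$, contradicting the definition of $X^*$. Hence after finitely many iterations the position of $SetX$ exceeds $B-d$, and $G2B2(B,d)$ terminates.
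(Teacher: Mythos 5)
Your strategy---reduce non-termination to a pointwise claim (``at every $X$, at least one of the two alternating configurations forces $SetX$ strictly toward $B$'') and then finish with a limit argument---is genuinely different from the paper's proof, and the difference is exactly where it breaks. The final monotone-convergence step does not go through: the claim applied at the accumulation point $X^*$ only tells you what the protocol would compute if $SetX$ were located \emph{at} $X^*$, whereas in your execution $SetX$ only ever occupies positions $X_n < X^*$. The protocol is an arbitrary deterministic cautious algorithm, so its destination function need not be continuous nor uniformly bounded away from stagnation; ``destination $> X_n$ for every $n$'' is perfectly compatible with $X_n \nearrow X^* \leq B-d$ and the loop never terminating, and nothing in the model lets you transfer the strict inequality at $X^*$ to a neighborhood of $X^*$. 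The pointwise claim itself is also delicate: its proof-by-contradiction assumes the adversary can make $SetX$ ``stagnate or drift leftward'' forever, but if the destination at $X$ is strictly left of $X$ the robots land at a new position $X'$ about which your hypothesis says nothing, and confining $SetB$ to $[X,B]$ via Lemma~\ref{lemma_fact1} does not keep $SetB$ separated from $SetX$, so the non-convergence of the execution you construct is not actually established.

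The paper avoids both problems by keeping the argument global. It takes the entire hypothetical non-terminating execution $E_0$ of $G2B2$---in which $distance(X,B)$ stays above $d$ forever---and builds a mirrored execution $E'_0$ in which the correct/Byzantine labelling is swapped ($\lfloor (n-f)/2\rfloor$ correct robots at $X$, $\lceil (n-f)/2\rceil$ at $B$, Byzantine robots alternating between the left of $X$ and the right of $B$) and the two groups are activated alternately. Each $C'_i$ is indistinguishable from $C_i$, so $E'_0$ is a fair, legal execution of the convergence protocol in which two groups of correct robots remain at distance at least $d$ forever; that contradicts convergence directly, with no limit point, no continuity assumption, and no per-position analysis. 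To salvage your route you would have to replace the limit argument by this kind of whole-execution contradiction: non-termination itself \emph{is} the non-converging execution, once reinterpreted.
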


We are now ready to prove $Fact 2$.

\begin{lem}
\label{lemma-fact2}
For $n \leq 5f$, $\forall d<distance(A,B)$, if the robots run a cautious convergence algorithm, the fully distributed scheduler is able to move the robots of $SetX$ into a position $\geq B-d$ or $\leq A+d$.
\end{lem}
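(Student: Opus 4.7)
The plan is to obtain Lemma~\ref{lemma-fact2} as a direct corollary of Lemmas~\ref{lemma_fact2_even} and~\ref{lemma_fact2_odd} by a case split on the parity of $n-f$. Fix $d<\mathrm{distance}(A,B)$; by the left-right symmetry of the network $N$ it suffices to exhibit an adversarial schedule that pushes $SetX$ to within distance $d$ of $Border = B$. If $n-f$ is even, I would have the adversary run Algorithm G2B1 with inputs $(B,d)$; Lemma~\ref{lemma_fact2_even} guarantees that its outer loop terminates, and the exit condition of that loop is exactly $\mathrm{distance}(X,B)\leq d$, so at termination the robots of $SetX$ are located at $X\geq B-d$. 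If $n-f$ is odd, the adversary instead runs Algorithm G2B2 with the same inputs, and Lemma~\ref{lemma_fact2_odd} yields the analogous conclusion.

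Before invoking either sub-lemma I would verify that the Byzantine placements prescribed by the two algorithms are actually legal in the relevant range $3f<n\leq 5f$. In both cases the prescribed counts sum to exactly $f$: $(n-3f)/2+(5f-n)/2=f$ in the even case and $(n-3f+1)/2+(5f-n-1)/2=f$ in the odd case. Non-negativity and integrality of each count follow from the hypothesis $3f<n\leq 5f$ together with the parity assumption. In particular, ``$n-f$ odd'' combined with $n\leq 5f$ forces $n\leq 5f-1$, so $(5f-n-1)/2$ is a non-negative integer; and $n\geq 3f+1$ makes $(n-3f+1)/2\geq 1$, which is required so that the instruction ``move a Byzantine robot from $Border$ to $X$'' inside G2B2 is executable. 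The boundary case $n\leq 3f$ is already handled by the earlier $3f+1$-robot impossibility result, so only the window $3f<n\leq 5f$ needs to be addressed here.

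The real work is done inside Lemmas~\ref{lemma_fact2_even} and~\ref{lemma_fact2_odd}, which the present statement uses as black boxes; consequently, the main obstacle is not in this packaging step but in those two termination results. There one has to argue that any cautious algorithm, when faced with the symmetric configuration engineered by G2B1 (respectively the near-symmetric, alternating configuration engineered by G2B2), is forced to move the robots of $SetX$ by a strictly positive amount towards $Border$ on each of their activations — in the odd case using the pigeonhole-style observation noted in the paper that any cautious algorithm must displace either the larger or the smaller multiplicity group, so alternating which of $X$ and $Border$ plays each role guarantees progress every two cycles. Given those two lemmas, the fully asynchronous scheduler can activate $SetX$ as many (finite) times as the while-loop requires, so the outer loop's termination condition $\mathrm{distance}(X,Border)\leq d$ is eventually met, completing the proof of Lemma~\ref{lemma-fact2}.
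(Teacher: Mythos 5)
Your proposal is correct and follows exactly the paper's own route: Lemma~\ref{lemma-fact2} is deduced directly from Lemmas~\ref{lemma_fact2_even} and~\ref{lemma_fact2_odd} by a case split on the parity of $n-f$, with the termination conditions of G2B1/G2B2 giving the desired placement of $SetX$. The additional checks you perform (that the Byzantine counts prescribed by the two algorithms are non-negative integers summing to $f$ in the window $3f<n\leq 5f$) are a useful sanity verification that the paper leaves implicit, but they do not change the argument.
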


\begin{proof}
The proof follows directly from Lemmas~\ref{lemma_fact2_even} and \ref{lemma_fact2_odd}.
\end{proof}

\paragraph{The Split function}

The purpose of Algorithms $G2B1$ and $G2B2$ is to push the intermediate robots of $SetX$ as close the adversary want to the extremities of the network. For ease of the description, we assume in what follows that the adversary want to push them towards the extremity $B$. These two routines are then used by the adversary to prevent the convergence of the algorithm. For the algorithm of the adversary to work, it is necessary to keep the robots of $SetA$, $SetB$ and $SetX$ separated from each other and to avoid for example that the robots of $SetX$ merge with those of $SetB$ and form a single point of multiplicity. Yet, functions $G2B1$ and $G2B2$ cannot prevent such a situation to appear because the destinations are computed by the convergence algorithm which can order the robots to move exactly towards $B$. If the distance to travel is too small ($distance (X, B) \leq \delta_i$ for all $i \in SetX$), then the adversary can not stop the robots of $SetX$ before they arrive at $B$.
To recover from this situation and separate the robots that have merged, we define a new function $Split (Set, Border)$ which separates the robots of $Set$ from those located at $Border$. For example, $Split (SetX, B)$ separates the robots of $SetX$ from those of $SetB$ by directing them towards $A$. Lemma~\ref{lem:pour_split} is used to prove that function $Split$ performs as planned.
Let $N$ be a network of $n$ robots divided between two positions $A$ and $B$. let $p$ and $q$ be the number of robots at $A$ and $B$ respectively. These robots are endowed with a cautious convergence algorithm that tolerate the presence of up to $f$ Byzantine robots. Lemma~\ref{lem:pour_split} proves that if a robot in $A$ or $B$ is activated, it cannot remain in its position and moves toward the robots located in the other point.

\begin{lem}
\label{lem:pour_split}
If $|p-q| \geq f$, then if a robot at $A$ (resp. $B$) is activated, its destination computed by any cautious convergence algorithm lays inside $(A, B]$ (resp. $[A, B)$).
\end{lem}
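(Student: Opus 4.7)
My plan is to argue by contradiction, treating only the case of a correct robot at $A$ (the case at $B$ follows by swapping endpoints, which is legitimate because the hypothesis $|p-q|\geq f$ is symmetric in $A$ and $B$). Assume, for contradiction, that some cautious convergence algorithm $P$ computes a destination equal to $A$ for a correct robot at $A$ in the configuration $(p,q)$.

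The first step uses anonymity and determinism of $P$: every correct robot at $A$ observes the same configuration, computes the same destination, and therefore remains at $A$ whenever it is activated. The next step leverages cautiousness together with $|p-q|\geq f$ to extend this stationarity to correct robots at $B$. Because $|p-q|\geq f$, the adversary has enough Byzantine budget to produce, for every admissible placement, a symmetric cautious constraint: whichever $\le f$ robots are declared Byzantine, correct robots always occupy both $A$ and $B$, so the range of correct positions is consistently $[A,B]$. Running the pinning argument of the previous paragraph from the perspective of a robot at $B$ then yields $D_B = B$, i.e., no correct robot at $B$ moves either.

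The main obstacle, and the step that requires the most care, is turning this pointwise stationarity into a genuine contradiction with Definition~\ref{def:byz-convergence}. I would invoke the fully asynchronous scheduler: it may interleave activations of robots at $A$ and $B$ indefinitely, and between successive Look phases the adversary can reposition its $f$ Byzantine robots so as to re-create the observation $(p,q)$ at every cycle. In the resulting infinite execution every correct robot is activated infinitely often, yet every correct robot remains at its initial position, so the diameter of the correct-robot positions stays equal to $\mathit{distance}(A,B)>0$ forever. This contradicts the convergence specification, forcing $D_A\in(A,B]$; the symmetric argument applied to a correct robot at $B$ gives $D_B\in[A,B)$, completing the proof.
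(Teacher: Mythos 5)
Your overall strategy (assume $D_A=A$, show nobody ever moves, contradict convergence under a fair asynchronous scheduler) is the same as the paper's, but the pivotal step is not justified. From the hypothesis that the algorithm pins a robot at $A$ you immediately conclude, ``running the pinning argument from the perspective of a robot at $B$,'' that $D_B=B$. This does not follow. A robot at $B$ observes a configuration with $q$ robots at its own location and $p$ at the other, while a robot at $A$ observes $p$ at its own location and $q$ at the other; when $p\neq q$ these views are not related by any symmetry, so the assumption $D_A=A$ tells you nothing about $D_B$. The only ingredient you invoke, cautiousness, yields $D_B\in[A,B]$, not $D_B=B$. So your infinite non-converging execution is only established in the symmetric case $p=q$; the algorithm could perfectly well delegate all the movement to the robots at $B$ and still converge, which would not contradict anything you have shown.

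The paper closes exactly this gap with a Byzantine-relocation trick. It first builds an indistinguishable configuration $C_2$ in which $\min(f,p)$ Byzantine robots sit at $A$ and the remaining $f-\min(f,p)$ at $B$; by indistinguishability the robots at $A$ still refuse to move. It then moves $|p-q|$ Byzantine robots from one endpoint to the other (possible precisely because the multiplicities differ by at most $f$ --- note the lemma's hypothesis is used in the form $|p-q|\le f$, consistent with the surrounding text, so the printed ``$\geq$'' should be read with care), obtaining a configuration $C_3$ that is the mirror image of $C_2$. In $C_3$ the robots at $B$ see, up to reflection, exactly what the robots at $A$ saw in $C_2$, hence by determinism and anonymity they also stay put. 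Alternating between $C_2$ and $C_3$ forever gives the fair, non-converging execution. Your proof needs this (or an equivalent) construction to bridge from ``robots at $A$ are pinned'' to ``robots at $B$ are pinned'' when $p\neq q$.
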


We now present Function $Split(Set, Border)$ that is presented as Algorithm~\ref{function split}. 
We first define $Max{\delta}$ as $max\{\delta_i / i$ is a correct robot$\}$ such that $\delta_i$ is the minimum distance that can be traveled by a robot $i$ before it may be stopped by the adversary. This means that if a group of robots ($SetX$ in our case) are distant from their destination by more than $Max{\delta}$, the adversary is able to stop them all before they reach their destination. Notice now that in the setting of network $N$ described in Figure~\ref{fig-configuration-c1}, $SetA$ and $SetB$ contain each exactly $f$ correct robots.
If robots of $SetX$ merge with those of $SetB$ for example, they form a set of $n-2f$ correct robots colocated in the same multiplicity point. By placing all the Byzantine robots at $A$, this location contains a set of $2f$ robots. The difference between the two sets of robots in $A$ and $B$ is lower or equal to $f$ (because $3f < n \leq 5f$). Then if we activate the robots of $SetX$ (which are located at $B$), they will move towards $A$ according to Lemma~\ref{lem:pour_split}. By stopping these robots once they all travelled a distance equal to $Max{\delta}$ or reached they destination before, we ensure that the three sets $SetA$, $SetX$ and $SetB$ are disjoint, because the initial distance between A and B is $>Max{\delta}$. 

\begin{algorithm}
\begin{algorithmic}
\small
\REQUIRE $distance(A,B) > Max{\delta}$
\STATE
\STATE \textbf{Variables:}
\STATE \textbf{Input:} $Border$: is equal to $A$ or to $B$.
\STATE \textbf{Input:} $Set$: the set of robots to move away from $Border$.
\STATE $OppositeBorder$: is equalt to $B$ if the input $Border$ is equal to $A$, and vice versa.
\STATE
\STATE \textbf{Actions:}\\
\STATE Place all Byzantine robots in $OppositeBorder$.
\STATE Activate the robots of $Set$, and stop them at a point $Max{\delta}$ away from $Border$.
\end{algorithmic}
\caption{Function Split(Set, Border)}          
\label{function split}                  
\end{algorithm}

\paragraph{The fully asynchronous scheduler algorithm}

\begin{thm}
In the ATOM model, the problem of Byzantine resilient convergence is impossible to solve with a cautious algorithm under a fully asynchronous scheduler.
\end{thm}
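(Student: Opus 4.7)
The plan is to derive a contradiction from assuming that a cautious convergence algorithm $P$ works correctly in the ATOM model under a fully asynchronous scheduler, by exhibiting an admissible schedule that forever keeps the robots of $SetA$ and $SetB$ near their initial positions $A$ and $B$ respectively. The starting configuration is the one depicted in Figure~\ref{fig-configuration-c1}: $f$ correct robots in $SetA$ at $A$, $f$ correct robots in $SetB$ at $B$, and $m-2f$ correct robots in $SetX$ at some intermediate position $X \in (A,B)$. The remaining $f$ robots are Byzantine and their positions are chosen by the adversary at every step.

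The adversary proceeds in an infinite sequence of phases, each phase handling one ``side'' of the configuration. I would first describe a single \emph{right} phase parametrised by a shrinking distance $d_k > 0$: (i) place the Byzantines as required by $G2B1$ or $G2B2$ (according to the parity of $n-f$) and invoke the corresponding subroutine to drive $SetX$ into the interval $[B-d_k, B)$; this terminates by Lemma~\ref{lemma_fact2_even}, Lemma~\ref{lemma_fact2_odd}, and hence Lemma~\ref{lemma-fact2}; (ii) if during this process the robots of $SetX$ have merged with those of $SetB$, first invoke $Split(SetX,B)$, which by Lemma~\ref{lem:pour_split} pulls $SetX$ strictly back into $(A,B)$ by a distance of at most $Max{\delta}$, and then resume the $G2B$ routine; (iii) once $SetX$ lies in $[B-d_k,B)$ with $SetX$ and $SetB$ disjoint, relocate all $f$ Byzantines into $[X,B]$ and activate every robot of $SetB$ once. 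By Lemma~\ref{lemma_fact1} (Fact~1), each such activation produces a destination inside $[X,B]$, so all robots of $SetB$ remain within distance $d_k$ of $B$. A symmetric \emph{left} phase uses the border $A$ instead and keeps $SetA$ within distance $d_k$ of $A$ via the same lemmas.

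The overall schedule alternates right and left phases with $d_k \to 0$, e.g.\ $d_k = 2^{-k}\cdot \mathit{distance}(A,B)/3$, and additionally activates each correct robot of $SetX$ infinitely often (which is automatic, since both $G2B1$ and $G2B2$ activate them in every iteration of their while loops, and there are infinitely many phases). The key invariants to verify along the construction are: (a) between phases, $SetA$, $SetX$ and $SetB$ occupy three pairwise distinct positions with $SetA$ near $A$ and $SetB$ near $B$, so that Fact~1 is applicable in the next phase; (b) the Byzantine placements required by $G2B1$, $G2B2$ and the Fact~1 activations never exceed $f$ Byzantines; and (c) every correct robot is activated infinitely often, so the schedule is admissible for a fully asynchronous scheduler. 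Item (b) is a simple arithmetic check from the hypothesis $3f < n \leq 5f$; item (c) follows from the alternation of phases together with the activations performed inside the $G2B$ routines.

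The main obstacle is item (a): ensuring that the three groups stay separated across phases despite the freedom a cautious algorithm has to send $SetX$ arbitrarily close to, or exactly onto, $B$ (or $A$). This is precisely why the $Split$ function is needed, and why the precondition $\mathit{distance}(A,B) > Max{\delta}$ matters: after a $Split$, the set $SetX$ is pulled back by exactly $Max{\delta}$ from the border, which guarantees $SetX$ is disjoint from both $SetA$ and $SetB$ before the next Fact~1 activation. Once these invariants are established, $SetA$ stays within $d_k$ of $A$ and $SetB$ within $d_k$ of $B$ at the end of phase $k$; in particular, the distance between two correct robots, one in $SetA$ and one in $SetB$, remains at least $\mathit{distance}(A,B) - 2d_k$, which stays bounded below by, say, $\mathit{distance}(A,B)/2$ for all sufficiently large $k$. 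This contradicts Definition~\ref{def:byz-convergence}, completing the proof.
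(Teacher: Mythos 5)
Your construction is, in substance, the paper's own proof: the same alternation of $G2B$ pushes, Fact~1 activations of $SetA$ and $SetB$, and $Split$ repairs, justified by the same Lemmas~\ref{lemma_fact1}, \ref{lemma-fact2} and \ref{lem:pour_split}, with the same fairness argument for admissibility. The one step that is wrong as written is the closing estimate. Your invariant ``$SetA$ stays within $d_k$ of $A$ at the end of phase $k$'' does not hold with $A$ the \emph{initial} left extremity: Fact~1 only confines the activated robots to $[A_k, X_k]$ where $A_k$ is the \emph{current} leftmost correct position, so the left extremity can drift rightward by up to $d_j$ in every left phase $j$, and symmetrically for $B$. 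The separation is therefore bounded below by $\mathit{distance}(A,B)$ minus the \emph{sum} of all the $d_j$ spent so far, not by $\mathit{distance}(A,B)-2d_k$ (which would absurdly improve as $k$ grows, whereas cautiousness makes the separation non-increasing). This accumulation is exactly why Algorithm~\ref{alg_adversary} requires $d_0<\mathit{distance}(A,B)/4$ and takes $d_0=\mathit{distance}(A,B)/10$, yielding the stated lower bound of $6d/10$; with your $d_0=\mathit{distance}(A,B)/3$ the total drift is $4d_0>\mathit{distance}(A,B)$ if a left and a right phase share the same $d_k$ as in the paper, and only barely survives (giving $\mathit{distance}(A,B)/3$) under the reading where every successive phase halves $d_k$. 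Replace the invariant by the accumulated-drift bound and choose $d_0<\mathit{distance}(A,B)/4$, and your argument closes.
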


\begin{algorithm}
\begin{algorithmic}
\small
\REQUIRE $distance(A,B) > Max{\delta}$
\STATE

\STATE \textbf{Definitions}:\\

\STATE $d_0$: any distance that is strictly smaller than $distance(A,B)/4$, let $d_0 \leftarrow distance(A,B)/10$.
\STATE $G2B(Border, d)$: equal to $G2B1(Border, d)$ if $n-f$ is even and equal to $G2B2(Border, d)$ if $n-f$ is odd
\STATE
\STATE \textbf{Actions:}\\

\WHILE {true}

\STATE $G2B(A, d_0)$.
\STATE Activate the robots at $A$.
\STATE if the robots of SetX are at $A$, then $Split(SetX, A)$.
\STATE $G2B(B, d_0)$.
\STATE Activate the robots at $B$.
\STATE if the robots of $SetX$ are at $B$, then $Split(SetX, B)$.
\STATE $d_0 \leftarrow d_0/2$

\ENDWHILE
\end{algorithmic}
\caption{Adversary Algorithm}          
\label{alg_adversary}                  
\end{algorithm}

\begin{proof}
We prove that for network $N$, there can be no cautious convergence algorithm for $n \leq 5f$ if the robots are activated by a fully asynchronous scheduler. The algorithm of the adversary is given as Algorithm~\ref{alg_adversary} and it can prevent any cautious algorithm to converge. Indeed, if the initial distance between robots at $A$ and $B$ is equal to $d$, then these robots will always remain distant from each other by a distance at least equal to $6d/10$. The proof of algorithm $\ref{alg_adversary}$ follows directly from Lemmas~\ref{lemma_fact1}, \ref{lemma-fact2} and \ref{lem:pour_split}. 
\end{proof}

\section{Deterministic Asynchronous Convergence}
\label{sec:algorithm}


In this section, we propose a deterministic convergence algorithm and
prove its correctness in CORDA model 
under a fully asynchronous scheduler when there are at least $5f+1$ robots, $f$ of which may be Byzantine.

\paragraph{Algorithm Description}
The idea of our algorithm is based on three mechanisms: (1) a trimming
function for the computation of destinations, 
(2) location dependency and (3) an election procedure.
The purpose of the trimming function is to ignore the most extreme
positions in the network when computing the destination. 
Robots move hence towards the center of the remaining positions.
Consequently, the effect of Byzantine robots is canceled since they 
cannot drag the correct robots away from the range of correct positions.

Location dependency affects the computation of the trimming function such that the returned result depends on the position of the calling robot. This leads to interesting properties on the relation between the position of a robot and its destination that are critical to convergence.
The election procedure instructs to move only the robots located at the two extremes of the network.
Thus, by the combined effect of these three mechanisms, as the
algorithm progresses, the extreme 
robots come together towards the middle of the range of correct positions which ensures the eventual convergence of the algorithm.

The algorithm uses three functions as follows. 
The trimming function $trim_{2f}^i()$ removes among the $2f$ largest positions of the multiset given in parameter \emph{only} those that are greater than the position of the calling robot $i$. Similarly, it removes among the $2f$ smallest positions only those that are smaller than the position of the calling robot. It is clear that the output of $trim_{2f}^i()$ depends on the position of the calling robot.
Formally, let $minindex_i$ be the index of the minimum position between $P_i(t)$ and $P_{2f+1}(t)$ (if $P_i(t)<P_{2f+1}(t)$ then $minindex_i$ is equal to $i$, otherwise it is equal to $2f+1$). Similarly, let $maxindex_i$ be the index of the maximum position between $P_i(t)$ and $P_{n-2f}(t)$ (if $P_i(t)> P_{n-2f}(t)$ then $maxindex_i$ is equal to $i$, otherwise it is equal to $n-2f$). $trim_{2f}^i(P(t))$ is the multiset consisting of positions $\{P_{minindex_i}(t), P_{minindex_i+1}(t), \ldots, P_{maxindex_i}(t)\}$.

The function $center()$ simply returns the median point of the input range. The two functions are illustrated in Figure \ref{fig-illustration-algo}).

The election function returns true if the calling robot is allowed to move. Only the robots that are located at the extremes of the networks are allowed to move, that is those whose position is either $\leq P_{f+1}(t)$ or $\geq P_{n-f}(t)$.

\begin{figure}[htbp]
\begin{center}
\includegraphics[scale=.4]{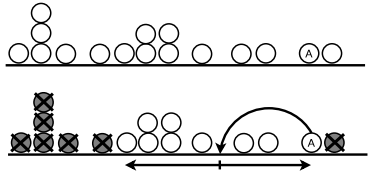}
\caption{Illustration of functions $trim_{2f}^i$ and $center$ for robots $A$ in a system of $(n=16, f=3)$ robots.}
\label{fig-illustration-algo}
\end{center}
\end{figure}

\begin{algorithm}
\begin{algorithmic}
\small
\STATE \textbf{Functions}:\\

\STATE $trim_{2f}^i(P(t))$: 
removes up to $2f$ largest positions that are larger than $P_i(t)$ and up to $2f$ smallest positions that are smaller than $P_i(t)$ from the multiset $P(t)$ given in parameter.
\STATE $center()$: returns the point that is in the middle of the range of points given in parameter.
\STATE $elected() \equiv ((P_i(t) \leq P_{f+1}(t))$ or $(P_i(t) \geq
P_{n-f}(t)))$. This function returns true if the calling robot is allowed to move.
\STATE
\STATE   \textbf{Actions}:\\

\STATE if $elected()$ move towards $center(trim_{2f}^i(P(t)))$
\end{algorithmic}
\caption{Convergence Algorithm under a fully asynchronous Scheduler}          
\label{alg-convergence-5f}                  
\end{algorithm}


By definition, convergence aims at asymptotically decreasing the range of possible positions for the correct robots. The shrinking property captures this property. An algorithm is shrinking if there exists a constant factor $\alpha \in (0,1)$ such that starting in any configuration the range of correct robots eventually decreases by a multiplicative $\alpha$ factor. Note that to deal with the asynchrony of the model, the diameter calculation takes into account both the positions and destinations of correct robots.

\begin{definition}[Shrinking Algorithm] 
\label{def:shrink}
An algorithm is \emph{shrinking} if and only if $\exists \alpha \in (0,1)$ such that $\forall t, \exists t^\prime > t$, such that $diam(U(t^\prime) \cup D(t^\prime)) < \alpha*diam(U(t) \cup D(t))$, where $U(t)$ and $D(t)$ are respectively the the multisets of positions and destinations of correct robots.
\end{definition}

A natural way to solve convergence is to never let the algorithm increase the diameter of correct robot positions. In this case the algorithm is called \emph{cautious}. This notion was first introduced in~\cite{dolev1986raa}. A cautious algorithm is particularly appealing in the context of Byzantine failures since it always instructs a correct robot to move inside the range of the positions held by the correct robots regardless of the locations of Byzantine ones. The following definition introduced first in \cite{BGT09c} customizes the definition of cautious algorithm proposed in \cite{dolev1986raa} to robot networks.

\begin{definition}[Cautious Algorithm]
\label{def:caut}
Let $D_i(t)$ be the last destination calculated by the robot $i$ before time $t$ and let $U^i(t)$ the positions of the correct robots as seen by robot $i$ before time $t$. \footnote{If the last calculation was executed at time $t^\prime \leq t$ then $D_i(t) = D_i(t^\prime)$.} An algorithm is \emph{cautious} if it meets the following conditions:
\emph{(i)} \textbf{cautiousness: } $\forall t,~D_i(t) \in range(U^i(t))$ for each robot $i$, and \emph{(ii)} \textbf{non-triviality: } $\forall t$, if $diameter(U(t)) \neq 0$ then $\exists t^\prime>t$ and a robot $i$ such that $D_i(t^\prime)\neq U_i(t^\prime)$ (at least one correct robot changes its position whenever convergence is not achieved).
\end{definition}


\begin{thm}\cite{BGT09c}
\label{th:cands}
Any algorithm that is both cautious and shrinking solves the convergence problem in faulty robots networks.
\end{thm}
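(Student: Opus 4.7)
The plan is to track the scalar $R(t) := diam(U(t) \cup D(t))$, i.e.\ the diameter of the current positions of correct robots together with their outstanding destinations, and to establish two facts whose conjunction immediately yields the conclusion: (a) cautiousness makes $R$ a non-increasing function of time, and (b) the shrinking property forces $R(t)\to 0$. Since $diam(U(t))\le R(t)$, combining (a) and (b) gives, for every $\epsilon>0$, a time $t_\epsilon$ after which $distance(U_i(t),U_j(t))<\epsilon$ for all correct $i,j$, which is exactly the specification of Definition~\ref{def:byz-convergence}.

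For (a), I will induct on the countable sequence of events that can change $U$ or $D$ in the CORDA model, namely Look, Compute, and Move sub-events. The inductive invariant is that the closed interval $I(t):=[\min(U(t)\cup D(t)),\max(U(t)\cup D(t))]$ is monotone non-increasing for set inclusion. A Look by robot $i$ at time $t$ merely records the snapshot $U^i(t)=U(t)$ and alters neither $U$ nor $D$. A Compute by $i$ at time $t$ based on a snapshot taken at some earlier time $t^\star\le t$ assigns to $D_i$ a value in $range(U^i(t^\star))=range(U(t^\star))$, by the cautiousness clause; by the inductive hypothesis this range is contained in $I(t^\star)\supseteq I(t)$, so the new $D_i$ lies in $I(t)$. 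Finally, a Move by $i$ translates $U_i$ along the uni-dimensional segment joining the old $U_i$ to $D_i$, and both endpoints lie in $I(t)$, so the new $U_i$ stays in $I(t)$. Hence $I$ never grows and $R$ never increases.

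For (b), I will iterate the shrinking property to extract times $t_0<t_1<t_2<\cdots$ with $R(t_{k+1})<\alpha\,R(t_k)$, so that $R(t_k)<\alpha^k R(t_0)$. Because $\alpha\in(0,1)$ this subsequence tends to $0$, and monotonicity from (a) upgrades subsequential convergence to convergence of the whole function: $R(t)\to 0$. Given $\epsilon>0$, choosing $t_\epsilon$ with $R(t_\epsilon)<\epsilon$ gives the desired bound on pairwise correct distances for all $t>t_\epsilon$.

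The main obstacle is the bookkeeping in (a) under CORDA asynchrony: a Compute may depend on a snapshot taken arbitrarily far in the past, and a Move may execute a destination that itself was computed from an even older snapshot. The induction handles this cleanly by applying the hypothesis at the time of the underlying Look, since cautiousness constrains $D_i$ only relative to that snapshot, and all subsequent events (further Computes, partial Moves stopped by the scheduler, remaining-distance translations) leave $U_i$ and $D_i$ inside the interval already bounded at the time of that snapshot. The non-triviality half of cautiousness plays no explicit role here: it is precisely the qualitative statement that the shrinking hypothesis replaces with a quantitative, geometric-rate version.
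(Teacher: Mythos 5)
The paper itself offers no proof of Theorem~\ref{th:cands}: the statement is imported from~\cite{BGT09c} and used as a black box, so your attempt can only be judged on its own merits. Your overall decomposition is the standard, intended one --- show that $R(t)=diam(U(t)\cup D(t))$ never increases, use the shrinking property to extract times $t_k$ with $R(t_k)<\alpha^k R(t_0)$, and let monotonicity upgrade subsequential decay to $R(t)\to 0\geq diam(U(t))$. Part (b) is correct once (a) is granted.

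The gap is in the Compute step of your induction for (a). You argue that the new destination satisfies $D_i\in range(U^i(t^\star))\subseteq I(t^\star)\supseteq I(t)$ and conclude $D_i\in I(t)$; the last inclusion points the wrong way, since membership in the larger interval $I(t^\star)$ does not yield membership in the smaller interval $I(t)$. This is not merely a notational slip: under the paper's definition of cautiousness (which constrains $D_i$ only relative to the \emph{snapshot} $U^i$), monotonicity of $I$ genuinely fails in CORDA. Concretely, let two correct robots sit at $0$ and $10$ with pending destinations $0$ and $5$, so $I=[0,10]$; robot $i$ at $0$ performs its Look and records $range(U^i(t^\star))=[0,10]$; the robot at $10$ then completes its move to $5$, shrinking $I$ to $[0,5]$; robot $i$ now Computes from its stale snapshot, and cautiousness permits $D_i=10$, restoring $I=[0,10]$. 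Your closing paragraph names exactly this obstacle and asserts that ``applying the hypothesis at the time of the underlying Look'' handles it, but applying the hypothesis at $t^\star$ bounds $D_i$ only by $I(t^\star)$, which is the bound you must improve on, not the one you need. The repair is to strengthen the invariant: track the interval $\hat I(t)$ spanned by current positions, current destinations, \emph{and} the ranges $range(U(s))$ of all snapshots taken at $s\le t$ whose Compute is still pending. That augmented interval is non-increasing by your case analysis, contains $I(t)$, and coincides with $I(t)$ whenever no snapshot is pending; one then needs the model-supplied fact that every cycle terminates, so that the stale snapshots alive at each shrinking time are eventually flushed before the geometric iteration of (b) is restarted. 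Without this extra bookkeeping, the passage from subsequential convergence to convergence of the whole function does not follow from the stated hypotheses.
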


In the appendix we prove the correctness of Algorithm
\ref{alg-convergence-5f} in the CORDA 
model under a fully asynchronous scheduler. In order to show that Algorithm
\ref{alg-convergence-5f} converges, 
we prove first that it is cautious then we prove that it satisfies the
specification of a shrinking algorithm. 
Convergence then follows from Theorem \ref{th:cands}.


%

\section{Concluding remarks}
\label{sec:conclusion}

Our work closes the study of the convergence problem for 
unidimensional robot networks.
We studied the convergence problem under the most generic settings: 
asynchronous robots under unbounded adversaries and byzantine fault model.
We proved that in these settings the byzantine resilience lower bound is $5f+1$ and 
we propose and prove correct the first deterministic convergence algorithm that meets this lower bound.
We curently investigate the extension of the curent work to 
the multi-dimensional spaces.

\begin{footnotesize}
\bibliographystyle{plain}
\bibliography{convergence}
\end{footnotesize}

\newpage

\pagenumbering{roman}

\section*{Appendix}

\subsection*{Proof of Lemma~\ref{lemma_fact1}}

\begin{proof}
We prove the lemma only for the case when all Byzantine robots are inside $[A, X]$, and we denote the corresponding configuration by $C_1$ (see Figure~\ref{fig-configuration-c1}). The case where all Byzantine robots are inside $[X, B]$ is symmetric.

\begin{figure}[htbp]
\centering
\includegraphics[height=2cm,width=5cm]{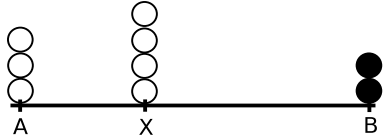}
\caption{Illustration of Lemma $\ref{lemma_fact1}$, configuration $C_2$}
\label{fig-configuration-c2}
\end{figure}

Let $C_2$ (see Figure~\ref{fig-configuration-c2}) be a similar configuration of $n$ robots where the distribution of positions is isomorphic to that of $C_1$, but where the correct and Byzantine robots are located differently: all robots at $B$ are byzantine (there are $f$ such robots), and all robots inside $[A,X]$ are correct. Since the robot convergence algorithm is cautious, the diameter of correct robots in $C_2$ must never decrease, and then all their calculated destination points must lay inside $[A,X]$.
Since $C_1$ and $C_2$ are indistinguishable to individual robots of $SetA$, the Look and Compute phases give the same result in the two cases, which proves our lemma.
\end{proof}

\subsection*{Proof of Lemma~\ref{lemma_fact2_even}}

\begin{proof}
We prove the Lemma by contradiction. We assume that the algorithm does not terminate for a given input distance $d_0$, and we prove that this leads to a contradiction. We consider only the case where $Border=B$, the other case being symmetric. The non-termination of the algorithm implies that there exists some distance $d_1 \leq d_0$ such that robots at $X$ and $B$ always remain distant by at least $d_1$ from each other, even if robots at $X$ are activated indefinitely.

Note that the placement of Byzantine robots in $G2B1$ implies that initially, and for $n\leq 5f$, the number of robots located at $X$ and $B$ is the same and is equal to $(n-f) / 2$ as illustrated in Figure~\ref{fig-fact2-even-configuration-c1}.$(a)$. We denote by $C_1$ the resulting configuration. We now construct a configuration $C_2$ (see Figure~\ref{fig-fact2-even-configuration-c1}.$(b)$) that is isomorphic to $C_1$ but with a different distribution of Byzantine and correct robots: correct robots are divided equally between $X$ and $B$, $(n-f)/2$ correct robots at $X$ and $(n-f)/2$ others at $B$. By hypothesis, these robots are supposed to converge to a single point (located between $X$ and $B$ as the convergence point is computed by a cautious algorithm).

\begin{figure}[htbp]
\centering
\subfigure[{Configuration $C_1$ for $(n=13, f=3)$}]{\includegraphics[scale=.4]{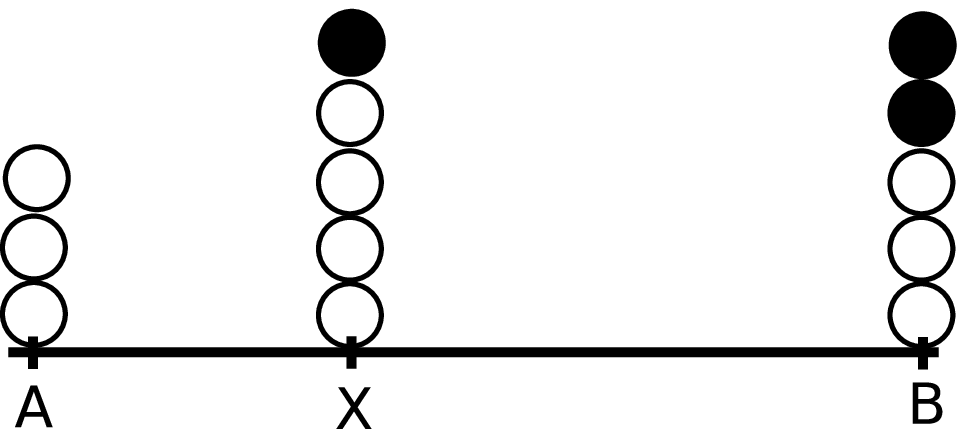}}
\hspace{2cm}
\subfigure[{Configuration $C_2$ for $(n=13, f=3)$}]{\includegraphics[scale=.4]{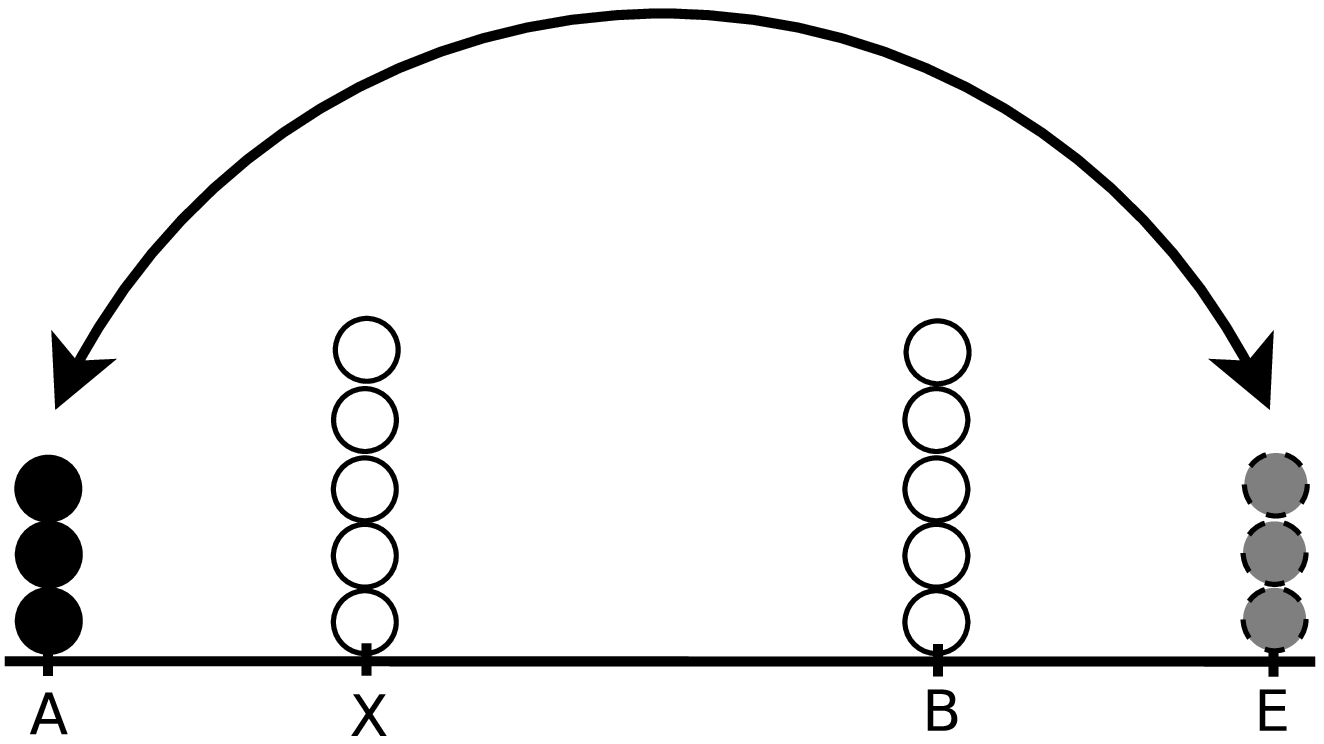}}
\caption{Illustration of lemma $\ref{lemma_fact2_even}$ (Fact2, $(n-f)$ even)}
\label{fig-fact2-even-configuration-c1}
\end{figure}

The placement of Byzantine robots and the choice of activated robots at each cycle is divided into two parts. During even cycles, Byzantine robots are placed at point $A$ and robots located at $X$ are activated. During odd cycles, the scheduler constructs a strictly symmetrical configuration by moving Byzantine robots from $A$ to a point $E$ with $E> B$ and $distance (B, E)=distance (A, X)$. In this case, the scheduler activates robots at $B$.

In these conditions, activating robots at $X$ ensures that they always remain at a distance of at least $d_1$ from those located at $B$ (as in configuration $C_1$). Indeed, configurations $C_1$ and $C_2$ are equivalent and completely indistinguishable to individual robots which must behave similarily in both cases (as the algorithm is deterministic). And by symmetry, the activation of robots at $B$ during odd cycles also ensures that minimum distance of $d_1$ between the two groups of robots. Hence, robots at $X$ and $B$ remain separated by a distance of at least $d_1$ forever even if activated indefinitely, which prevents the convergence of the algorithm and leads to a contradiction. This proves our Lemma.
\end{proof}

\subsection*{Proof of Lemma~\ref{lemma_fact2_odd}}

\begin{proof}
We consider in our proof only the case when $Border=B$ since the other case is symmetric. The placement of Byzantine robots in $G2B2$ is such that the multiplicity of $X$ exceeds that of $B$ by $1$ during even cycles, and lowers it by $1$ during odd cycles. We denote by $C_0$ the initial configuration (in which the multiplicity of $X$ is less than of $B$ by $1$ as illustrated in Figure \ref{fig-fact2-odd-configuration-c0}.$(a)$). 

We assume for the purpose of contradiction that $G2B2$ does not terminate for some input distance $d_0$. This means that robots of $SetX$ and $SetB$ remain always distant from each others by a distance at least equal to $d_1$ with $d_1$ being some distance $\leq d_0$. The resulting execution in this case is denoted by $E_0 = \{C_0, C_1, C_2, C_3, ...\}$. A configuration $C_{i+1}$ is obtained from $C_i$ by activating robots at $X$, letting them execute their Move phases, and moving one Byzantine robot from $X$ to $B$ or \emph{vice versa}.

\begin{figure}[htbp]
\centering
\subfigure[{Initial configuration $C_0$ for $(n=12, f=3)$}]{\includegraphics[scale=.4]{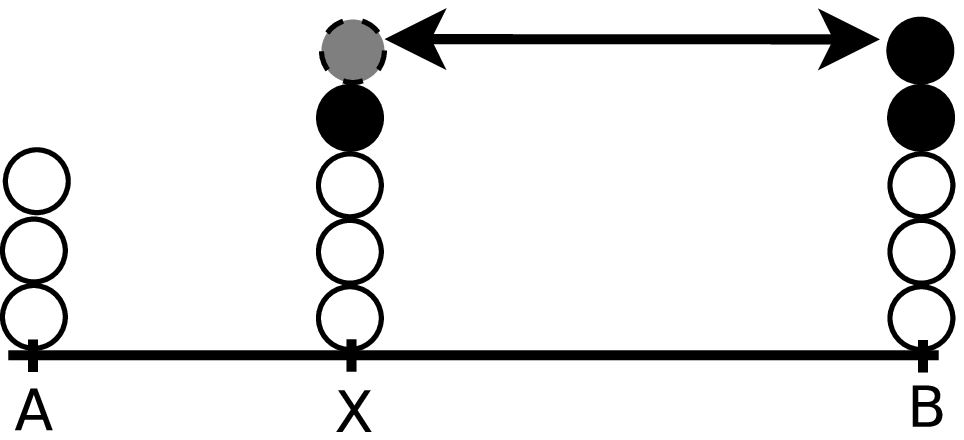}}
\hspace{2cm}
\subfigure[{Initial configuration $C^\prime_0$ for $(n=12, f=3)$}]{\includegraphics[scale=.4]{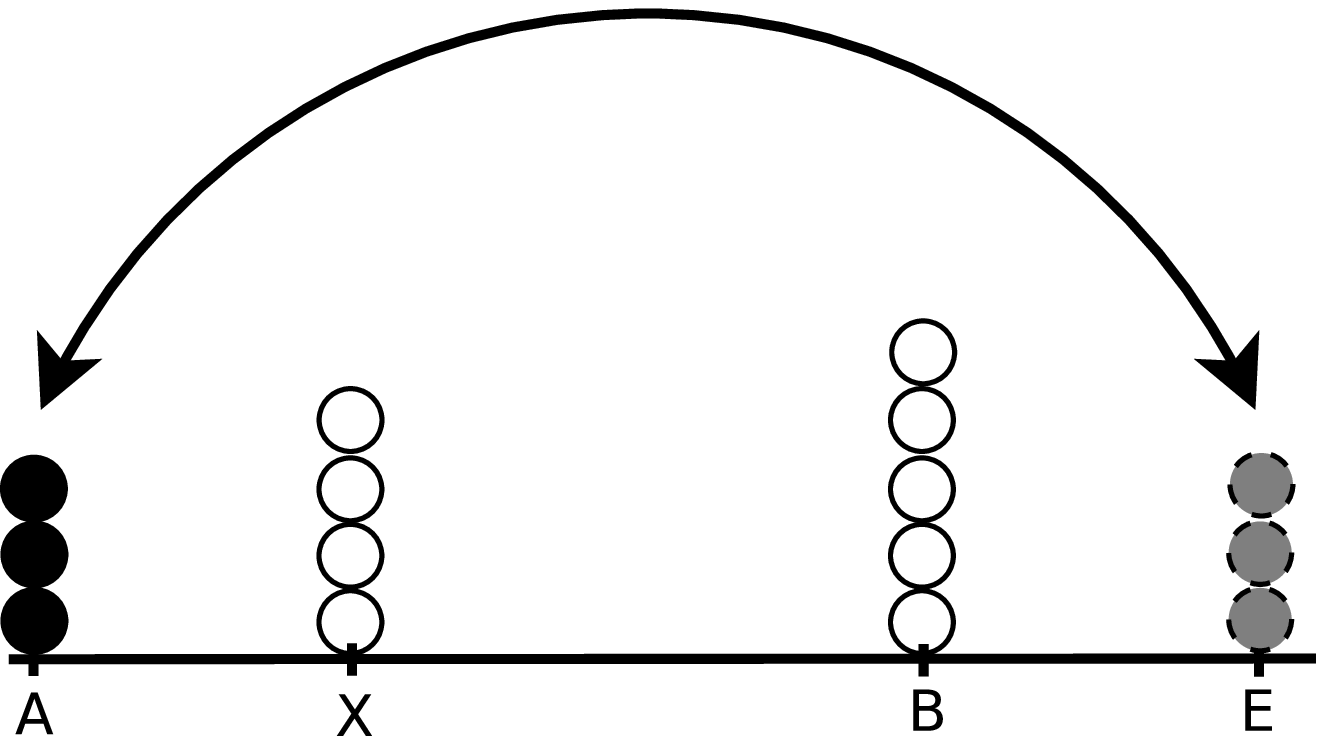}}
\caption{Illustration of lemma $\ref{lemma_fact2_odd}$ (Fact2, $(n-f)$ odd)}
\label{fig-fact2-odd-configuration-c0}
\end{figure}

We construct a configuration $C^\prime_0$ equivalent to $C_0$ but where correct robots are divided between $X$ and $B$ with$ \lfloor (n-f) / 2 \rfloor$ robots at $X$ and $\lceil (n-f)/2 \rceil$ robots at $B$ (see Figure~\ref{fig-fact2-odd-configuration-c0}.$(b)$) . By definition, these robots must converge to a point between $X$ and $B$ since they are endowed with a cautious convergence algorithm. Byzantine robots are at $A$. Since $C^\prime_0$ and $C_0$ are equivalent, the activation of robots at $X$ and the displacement of Byzantine robots to the right of $B$ will produce a configuration $C^\prime_1$ that is equivalent to $C_1$ by symmetry. 

This time, activated robots are those at $B$. By moving them to their calculated destination points and by moving Byzantine robots again to the left of $X$ the scheduler can form a configuration $C^\prime_2$ which is equivalent to $C_2$.

This process can be repeated: during odd cycles, Byzantine robots are at the left of $X$ and robots at $X$ are activated. During even cycles, the situation is symmetrical: Byzantine robots are to the right of $B$ and robots at $B$ are activated. The obtained execution $E^\prime_0 = \{C^\prime_0, C^\prime_1, C^\prime_2, C^\prime_3, ...\}$ is equivalent to $E_0$, and robots at $X$ and $B$ remain separated by a distance at least equal to $d_1$ forever even if they are activated indefinitely. This prevents the convergence of the convergence protocol while ensuring fairness of activations, which contradicts the assumptions and proves our Lemma.
\end{proof}

\subsection*{Proof of Lemma~\ref{lem:pour_split}}

\begin{proof}
Let $C_1$ be the initial configuration, and consider the computed destination by an activated robot located at $A$ (the case of a robot located at $B$ is symmetric). Since the algorithm is cautious, this destination point is necessarily located inside $[A, B]$. For the lemma to be correct, it suffices to prove that this destination is different from $A$. In other words, we must prove that the robot moves towards $B$ upon its activation. So assume for the sake of contradiction that it is not the case, that is, the computed destination is $A$ and let us separate the analysis into three cases depending on the relationship between $p$ and $q$:

\begin{itemize}

\item \textbf{Case 1 ($p>q$):} Let $C_2$ be a configuration isomorphic to $C_1$ with the following placement of robots: At $A$ there are $min(f, p)$ Byzantine robots and $p-min(f,p)$ corrects ones, and at $B$ are located $f-min(f,p)$ Byzantine robots and $q-f+min(f,p)$ corrects ones. Since Configurations $C_1$ and $C_2$ are indistinguishable to individual robots, the destinations computed in the two cases are the same. So when the robots at $A$ are activated, they do not move. The next cycle, the adversary moves $p-q$ Byzantine robots from $A$ to $B$ to obtain a configuration $C_3$ symmetric to $C_2$. This time, the adversary activates the robots at $B$ which do not move either since $C_2$ and $C_3$ are symmetric. Then, the adversary brings the $p-q$ Byzantine robots to $A$ to get again the configuration $C_2$ and then activates the robots at $A$. The process repeats, and by placing these $p-q$ Byzantine robots in one cycle at $A$ and the next cycle at $B$, the adversary prevents the convergence of the algorithm. This is a contradiction.

\item \textbf{Case 2 ($p<q$):} We can reach a contradiction by using an argument similar to Case 1.

\item \textbf{Case 3 ($p=q$):} If the activated robots at $A$ do not move upon their activation, it is also the case at $B$ since the configuration is symmetric. This prevents the convergence of the algorithm and leads also to a contradiction.
\end{itemize}

Consequently, the lemma is proved.
\end{proof}

\subsection*{Proof of Algorithm~\ref{alg-convergence-5f}}

\subsubsection*{Algorithm \ref{alg-convergence-5f} is cautious}

In this section we prove that Algorithm \ref{alg-convergence-5f}  is a cautious algorithm (see Definition \ref{def:caut}) for $n>5f$. The following lemma states that the range of the trimmed multiset $trim_{2f}^i(P(t))$ is contained in the range of correct positions.


\begin{lem}
\label{trim_sub_range}
Let $i$ be a correct robot executing Algorithm \ref{alg-convergence-5f}, 
it holds that 
\[\forall t, range(trim_{2f}^i(P(t))) \subseteq range(U(t))
\]
\end{lem}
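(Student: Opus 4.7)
The plan is to establish the two endpoint inequalities $\max(trim_{2f}^i(P(t))) \leq \max(U(t))$ and $\min(trim_{2f}^i(P(t))) \geq \min(U(t))$ separately; together they yield the desired range containment. The two arguments are symmetric under reversing the sort order, so I would only spell out the upper bound in detail and appeal to symmetry for the lower one.

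First I would unfold the definition. By construction $\max(trim_{2f}^i(P(t))) = P_{maxindex_i}(t)$, where $maxindex_i$ equals $i$ (when $P_i(t) > P_{n-2f}(t)$) or $n-2f$ (otherwise), and I would split on these two cases. The case $maxindex_i = i$ is immediate: robot $i$ is correct by hypothesis, so its coordinate $P_i(t)$ lies in the multiset $U(t)$, hence $P_i(t) \leq \max(U(t))$.

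The case $maxindex_i = n-2f$ is the substantive one, and I would handle it by a counting argument on the tail of the sorted list $P_1(t) \leq P_2(t) \leq \ldots \leq P_n(t)$. Among the $f+1$ largest entries $P_{n-f}(t), P_{n-f+1}(t), \ldots, P_n(t)$, at most $f$ can be held by Byzantine robots, so at least one of these positions is held by a correct robot. That correct robot's coordinate is therefore $\geq P_{n-f}(t)$, which gives $\max(U(t)) \geq P_{n-f}(t) \geq P_{n-2f}(t) = \max(trim_{2f}^i(P(t)))$, the last inequality being monotonicity of the sorted sequence.

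The symmetric lower bound uses the $f+1$ smallest entries $P_1(t), \ldots, P_{f+1}(t)$ together with $P_{f+1}(t) \leq P_{2f+1}(t)$. I do not expect a real obstacle here: the proof is essentially index bookkeeping, with the standing assumption $n > 5f$ ensuring $2f+1 \leq n-2f$ so that the two index bounds do not cross. The one observation worth underlining is that the location-dependent trim is designed precisely so that the calling robot's own position caps the trim on each side whenever that robot already sits in the outermost $2f$ ranks, which is exactly what lets the "$maxindex_i = i$" branch close without any extra assumption on the rank of $i$.
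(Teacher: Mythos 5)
Your proof is correct and takes essentially the same route as the paper's: split into the two endpoint bounds, dispose of the branch where the calling robot's own position is the extremum by using that $i$ is correct, and handle the fixed-index branch ($P_{n-2f}(t)$, resp.\ $P_{2f+1}(t)$) by counting Byzantine robots against the sorted order. The only cosmetic difference is that you argue directly (at least one of the outermost $f+1$ ranks is held by a correct robot) where the paper argues by contradiction; both are the same pigeonhole against the bound $f$.
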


\begin{proof}
We prove that for any correct robot, $i$, the following conditions hold: 
\begin{enumerate}
\item $\forall t,~ min(trim_{2f}^i(P(t))) \in range(U(t)).$
\item $\forall t,~ max(trim_{2f}^i(P(t))) \in range(U(t)).$
\end{enumerate}

\begin{enumerate}
\item 
By definition, $min(trim_{2f}^i(P(t)))=min\{P_i(t), P_{2f+1}(t)\}$. 
Hence proving Property (1) reduces to proving 
$P_i(t) \in range(U(t))$ and $P_{2f+1}(t) \in range(U(t))$.

\begin{enumerate}
\item $P_i(t) \in range(U(t))$ directly follows from the assumption that robot $i$ is correct.
\item $P_{f+1}(t) \in range(U(t))$.
Suppose the contrary: there exists some time instant $t$ such that $P_{2f+1}(t) \notin range(U(t))$ 
and prove that this leads to a contradiction. If $P_{2f+1}(t) \notin range(U(t))$ 
then either $P_{2f+1}(t) < U_1(t)$ or $P_{2f+1}(t) > U_m(t)$.

\begin{enumerate}
\item If $P_{2f+1}(t) < U_1(t)$ then there are at least $2f+1$ positions $\{P_1(t),$ $P_2(t),$ $\ldots,$ $P_{2f}(t),$ $P_{2f+1}(t)\}$ 
that are smaller than $U_1(t)$ which is the \emph{first} correct position in the network at time $t$. This means that 
there would be at least $2f+1$ byzantine robots in the system. But this contradicts the assumption 
that at most $f$ byzantine robots are present in the system.

\item If $P_{2f+1}(t) > U_m(t)$ then since $n>5f$ there are more than $3f$ positions $\{P_{2f+1}(t), ...,P_n(t)\}$ 
that are greater than $U_m(t)$, which is the \emph{last} correct position in the system at time $t$. This also leads to a contradiction.\\

\end{enumerate}
\end{enumerate}

\item The property is symmetric to 2) and can be proved using the same argument.
\end{enumerate}
\end{proof}


A direct consequence of the above property is that correct robots always compute a destination within the range of positions held by correct robots, whatever the behavior of Byzantine ones. Thus, the diameter of positions held by correct robots never increases. Consequently, the algorithm is cautious. The formal proof is proposed in the following lemma.

\begin{lem}
\label{cautious-FS}
Algorithm \ref{alg-convergence-5f} is cautious for $n>5f$.
\end{lem}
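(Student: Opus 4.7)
The plan is to verify each of the two conditions required by Definition~\ref{def:caut} of a cautious algorithm. The cautiousness clause (i) follows almost immediately from Lemma~\ref{trim_sub_range}. The non-triviality clause (ii) requires a more delicate counting argument that genuinely exploits the assumption $n > 5f$.

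For clause (i), I would simply observe that a correct robot $i$ executing Algorithm~\ref{alg-convergence-5f} computes $D_i(t) = center(trim_{2f}^i(P(t)))$, which by definition of $center$ lies inside the interval spanned by $trim_{2f}^i(P(t))$. By Lemma~\ref{trim_sub_range}, this interval is contained in $range(U(t))$, and since all correct robots see the same multiset of positions up to a change of coordinates, $range(U(t)) = range(U^i(t))$. Hence $D_i(t) \in range(U^i(t))$, which is exactly what cautiousness requires.

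For clause (ii), suppose $diam(U(t)) \neq 0$ and write $l = U_1(t) < U_m(t) = r$. Pick a correct robot $i$ sitting at $l$ and a correct robot $j$ sitting at $r$. I would first verify that both are elected: since at most $f$ positions can lie strictly to the left of $l$ (those would all have to be Byzantine), the rank of $l$ in the sorted multiset is at most $f+1$, so $l \leq P_{f+1}(t)$; symmetrically $r \geq P_{n-f}(t)$. By fairness of the scheduler both robots perform a complete cycle at some time $t' > t$. Inspecting the definitions of $minindex_i$ and $maxindex_i$, one checks that $trim_{2f}^i(P(t'))$ spans $[l, P_{n-2f}(t')]$ and $trim_{2f}^j(P(t'))$ spans $[P_{2f+1}(t'), r]$. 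Thus $i$ moves iff $P_{n-2f}(t') > l$, and $j$ moves iff $P_{2f+1}(t') < r$.

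The key step is to show these two strict inequalities cannot fail simultaneously. If $P_{n-2f}(t') = l$, then at least $n-2f$ robots occupy positions $\leq l$, of which at most $n - m \leq f$ are Byzantine, so at least $m - 2f$ correct robots sit at $l$. Symmetrically $P_{2f+1}(t') = r$ would force at least $m - 2f$ correct robots at $r$. Summing, one would need $2(m - 2f) \leq m$, i.e.\ $m \leq 4f$, which contradicts $m \geq n - f \geq 4f + 1$ since $n > 5f$. Therefore at least one of $i, j$ computes at time $t'$ a destination strictly different from its current position, establishing non-triviality. The main obstacle will be the bookkeeping around $minindex_i$ and $maxindex_i$ and the paper's overloading of $P_i(t)$ as both robot $i$'s coordinate and the $i$-th entry in the sorted multiset; one has to argue carefully that even in the degenerate case where several robots share the leftmost or rightmost position, the trimmed multiset still starts exactly at $l$ and ends exactly at $P_{n-2f}(t')$ (resp.\ $[P_{2f+1}(t'), r]$) so that the counting argument above applies.
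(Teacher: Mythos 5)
Your proof is correct and follows the same structure as the paper's: clause (i) is obtained exactly as in the paper from Lemma~\ref{trim_sub_range} together with the definition of $center$, and clause (ii) rests, as in the paper, on the fact that the robots at $U_1(t)$ and $U_m(t)$ are always elected and that at least one of them must move. Your counting argument --- that $P_{n-2f}(t')=U_1(t')$ and $P_{2f+1}(t')=U_m(t')$ cannot hold simultaneously because each would pin at least $m-2f$ correct robots at an extreme, forcing $m\leq 4f$ and contradicting $m\geq n-f>4f$ --- actually supplies the justification for the step the paper merely asserts (``at least one of them will move''); the only point to make explicit is that the two extreme robots may be activated at different times under the asynchronous scheduler, so you should first dispose of the case where some correct robot computes a non-trivial destination before then (non-triviality already holds) and otherwise treat the correct configuration as static, after which your counting applies verbatim.
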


\begin{proof}
We have to prove the two properties of cautious algorithms, namely cautiousness and non-triviality.

\textbf{Cautiouness: }We start by the cautiousness property of our algorithm. 
According to Lemma \ref{trim_sub_range}, $range(trim_{2f}^i(P(t))) \subseteq
range(U(t))$ for each correct robot $i$, 
thus $center(trim_{2f}^i(P(t))) \in range(U(t))$. 
It follows that all destinations computed by correct robots are located 
inside $range(U(t))$ which proves the cautiousness property.

\textbf{Non-triviality: }By fairness, the robots at positions $U_1(t)$ and $U_m(t)$ are guaranteed to be eventually elected irrespective of the positions of byzantine robots. And at least one of them will move unless all correct robots are colocated in the same point.This proves the non-triviality condition.

\end{proof}


\subsubsection{Algorithm \ref{alg-convergence-5f} is Shrinking}

The following lemma proves that the only robots that can be elected
are those located at the extremes of the network, namely those whose
position is either less equal than $U_{f+1}(t)$ or greater equal than $U_{m-f}(t)$. The activation of these robots move them away from the extremes of the network, thereby reducing the diameter of positions held by correct robots which leads to convergence.

\begin{lem}
\label{lem:i_inf_f+1}
If some correct robot $i$ is activated at time $t$, then either
$U_i(t) \leq U_{f+1}(t)$ or $U_i(t) \geq U_{m-f}(t)$ where $m$ is the
number of correct robots in the network and $U_i(t)$ denotes the
position of correct robot $i$ at $t$; 
\end{lem}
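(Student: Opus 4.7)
The plan is to directly unfold the definition of the $elected()$ predicate and then convert the inequality, which is stated in the full configuration $P(t)$, into the analogous inequality in the correct-only multiset $U(t)$. Since robot $i$ is activated at time $t$, the election condition must hold, so either $P_i(t) \leq P_{f+1}(t)$ or $P_i(t) \geq P_{n-f}(t)$. I would treat these two cases symmetrically.

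For the first case $P_i(t) \leq P_{f+1}(t)$, the key step is to bound the number of positions strictly less than $P_i(t)$. By the definition of $P_{f+1}(t)$ as the $(f+1)$-th smallest element of $P(t)$, at most $f$ positions in $P(t)$ can be strictly less than $P_i(t)$ (otherwise the $(f+1)$-th smallest would itself be strictly less than $P_i(t)$, contradicting $P_i(t) \leq P_{f+1}(t)$). Since $U(t) \subseteq P(t)$, at most $f$ correct positions are strictly less than $P_i(t) = U_i(t)$. Hence, in the sorted ordering $U_1(t) \leq \cdots \leq U_m(t)$, the value $U_i(t)$ must satisfy $U_i(t) \leq U_{f+1}(t)$. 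The second case is fully symmetric: from $P_i(t) \geq P_{n-f}(t)$, together with the Byzantine bound $n-m \leq f$, at most $f$ correct positions can lie strictly above $U_i(t)$, which yields $U_i(t) \geq U_{m-f}(t)$.

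The main subtlety I anticipate is the handling of ties between positions. Because $P(t)$ and $U(t)$ are multisets rather than sets, naive rank-based arguments (e.g.\ "$P_i(t)$ occupies position $k$") can be fragile when several robots sit at the same coordinate. I plan to sidestep this by phrasing the counting argument exclusively in terms of the cardinality of $\{p \in P(t) : p < P_i(t)\}$ (and symmetrically $\{p \in P(t) : p > P_i(t)\}$), rather than in terms of a specific index; this avoids any ambiguity introduced by colocated robots. Beyond this bookkeeping, the proof is short and combinatorial, relying only on the inclusion $U(t) \subseteq P(t)$ and on the bound $n-m \leq f$ on the number of Byzantine robots.
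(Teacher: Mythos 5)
Your proof is correct and follows essentially the same route as the paper: unfold the $elected()$ predicate and then transfer the rank condition from $P(t)$ to $U(t)$ using $U(t)\subseteq P(t)$, the paper doing this by showing $P_{f+1}(t)\leq U_{f+1}(t)$ and $P_{n-f}(t)\geq U_{m-f}(t)$ by contradiction, while you count the elements strictly below (resp.\ above) $P_i(t)$ --- the same argument in slightly different clothing, and your multiset-aware phrasing is if anything a bit more careful about ties. One minor remark: the bound $n-m\leq f$ is not actually needed in the second case, since at most $f$ elements of the whole multiset $P(t)$ can exceed $P_{n-f}(t)$.
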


\begin{proof}
By definition of the algorithm, a robot is activated only if its position is either $ \leq P_{f+1}(t)$ or $ \geq P_{n-f}(t)$. To prove the lemma, it suffices then to show that $P_{f+1}(t) \leq U_{f+1}(t)$ and $P_{n-f}(t) \geq U_{m-f}(t)$:

To prove that $P_{f+1}(t) \leq U_{f+1}(t)$, we suppose to the contrary that $P_{f+1}(t) > U_{f+1}(t)$. In this case, $P_{f+1}(t)$ would be strictly greater than all the positions $\{U_1(t), ..., U_{f+1}(t)\}$, which contradicts the definition of $P_{f+1}(t)$ as the $(f+1)$-th position in the network.
This proves that $P_{f+1}(t) \leq U_{f+1}(t)$ and the same argument is used to prove that $P_{n-f}(t) \geq U_{m-f}(t)$, since the two cases are symmetric. 

\end{proof}

The following lemma proves an important property on the relationship
between the position of a robot and its computed destination. Indeed,
knowing the position $U_i(t)$ held by a correct robot $i$ at time $t$,
it is possible to give bounds on the possible 
value of its destination point $D_i(t)$. Interestingly, this bound holds irrespective of the positions of Byzantine robots and the actions of the adversary.

Formally, consider any initial configuration at time $t_0$, such that $U(t_0)$ and $D(t_0)$ are respectively the multiset of positions and destinations of correct robots at time $t_0$.
Define $UD(t_0)$ to be the union of $U(t_0)$ and $D(t_0)$.
By considering the cycles started by correct robots after $t_0$, the following property holds:

\begin{lem}
\label{lem:dist-bords}
For each correct robot $i$ that starts a cycle after $t_0$,
the following inequalities hold:
$$D_i(t) \in [\dfrac{U_i(t)+Min(UD(t_0))}{2}, \dfrac{U_i(t)+Max(UD(t_0))}{2}]$$.
\end{lem}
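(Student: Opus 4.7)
The plan is to combine cautiousness (Lemma~\ref{cautious-FS}) and Lemma~\ref{trim_sub_range} with the election rule and the explicit form of $trim_{2f}^i$, in order to write $D_i(t)$ in closed form and then sandwich it.

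First I would establish a \emph{persistence} claim: writing $I := [Min(UD(t_0)), Max(UD(t_0))]$, at every time $t \geq t_0$ every correct position and every last-computed correct destination lies inside $I$. A correct robot that continues executing a Move started before $t_0$ stays on the segment joining its $U(t_0)$-position to its $D(t_0)$-destination, both in $I$. For a correct robot $j$ that starts a new cycle at some $t' \geq t_0$, Lemmas~\ref{cautious-FS} and \ref{trim_sub_range} give $D_j(t') \in range(U(t'))$; an induction on the chronological order of post-$t_0$ Look events shows that $U(t')$ remains inside $I$, so the new destination is also in $I$.

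Next I would unpack the destination. Fix a correct robot $i$ starting a cycle at Look time $t$. Since $i$ is elected, either $P_i(t) \leq P_{f+1}(t)$ or $P_i(t) \geq P_{n-f}(t)$; I treat the first case, the second being symmetric. Because $n \geq 5f+1$ gives $f+1 \leq n-2f$, one has $P_i(t) \leq P_{2f+1}(t)$ and $P_i(t) \leq P_{n-2f}(t)$, so by the definition of $trim_{2f}^i$ we read off $\min(trim_{2f}^i(P(t))) = P_i(t) = U_i(t)$ and $\max(trim_{2f}^i(P(t))) = P_{n-2f}(t)$. Therefore
\[
D_i(t) \;=\; center(trim_{2f}^i(P(t))) \;=\; \frac{U_i(t) + P_{n-2f}(t)}{2}.
\]
It then suffices to show $P_{n-2f}(t) \in I$. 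For the upper bound, among the $2f+1$ largest positions $\{P_{n-2f}(t), \ldots, P_n(t)\}$ at most $f$ are Byzantine, so at least $f+1$ are correct, giving $P_{n-2f}(t) \leq U_{m-f}(t) \leq Max(U(t)) \leq Max(UD(t_0))$ by persistence. For the lower bound, among $\{P_1(t), \ldots, P_{n-2f}(t)\}$ at least $n-3f \geq 2f+1$ entries are correct, so some correct robot lies at or below $P_{n-2f}(t)$, whence $P_{n-2f}(t) \geq Min(U(t)) \geq Min(UD(t_0))$. Substituting into the expression for $D_i(t)$ yields the claimed interval.

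The main obstacle is the persistence claim: CORDA asynchrony allows a correct robot to be mid-Move from a pre-$t_0$ cycle, or to hold a stored destination computed before $t_0$ that is not yet executed, so one must simultaneously track current positions, pending destinations, and the snapshots taken during new cycles, and verify by induction over post-$t_0$ Look events that no such event can produce a destination escaping $I$. Once persistence is in hand, everything else reduces to a pigeonhole count on where the trimmed extremes can sit in a multiset containing at most $f$ Byzantine positions.
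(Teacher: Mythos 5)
Your route is genuinely different from the paper's. You exploit the election rule to pin down the trimmed multiset exactly and obtain the closed form $D_i=\frac{U_i+P_{n-2f}}{2}$ (resp.\ $\frac{U_i+P_{2f+1}}{2}$), then bound $P_{n-2f}$ by a pigeonhole count; the paper instead argues by contradiction without ever invoking election: it uses only the location-dependency fact $U_i(t_1)\in trim_{2f}^i(P(t_1))$ and reflects $U_i(t_1)$ through the center $D_i$ to exhibit a point $q\in trim_{2f}^i(P(t_1))$ lying below $\min(UD(t_0))$, contradicting Lemma~\ref{trim_sub_range}. Your explicit persistence claim is the right thing to isolate: the paper needs it too (its last step silently upgrades the bound $\min(trim_{2f}^i(P(t_1)))\geq\min(U(t_1))$ from Lemma~\ref{trim_sub_range} to a bound against $\min(UD(t_0))$) but never states or proves it, so making it an explicit induction over post-$t_0$ Look events is a genuine improvement in rigor.

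There is, however, one real gap. You establish the inequality only at the Look instant of the cycle, whereas the lemma --- as the paper proves it (``there exists a time $t\geq t_1$ in this cycle'') and as it is later applied (Lemmas~\ref{lem:dest_inside_S}, \ref{lem:dest_in_range} and \ref{lem:shrinking} invoke it for robots caught mid-Move at an arbitrary time $t$) --- must hold for every $t$ in the cycle, with $U_i(t)$ the robot's \emph{current} position, which in CORDA may already differ from the snapshot position. Your identity $D_i=\frac{U_i(t)+P_{n-2f}(t)}{2}$ is simply false once the robot has started moving. The repair is short: writing $t_L$ for the Look time, the quantity $2D_i-U_i(t)$ equals $P_{n-2f}(t_L)\in I$ at $t=t_L$ and moves monotonically toward $2D_i-D_i=D_i\in I$ as the robot advances toward $D_i$, hence stays in $I$; this is exactly the role played by the paper's remark that $U_i(t_1)\geq U_i(t)$. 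You should also dispose of the non-elected case explicitly (there $D_i(t)=U_i(t)\in I$ by persistence), since your closed form presupposes that robot $i$ is elected.
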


\begin{proof}
The proof is twofold. First, we show that (1) $D_i(t) \geq (U_i(t)+Min(UD(t_0))) / 2$. Then, we prove the symmetric property (2) $D_i(t) \leq (U_i(t)+Max(UD(t_0))) / 2$.

\begin{enumerate}
\item $D_i(t) \geq (U_i(t)+Min(UD(t_0))) / 2$ :

Assume towards contradiction that for some robot $i$ that start a cycle at time $t_1\geq t_0$, there exists a time $t\geq t_1$ in this cycle such that:
\[ D_i(t) < \dfrac{U_i(t) + min(UD(t_0))}{2} \]

Note that $U_i(t_1) \geq U_i(t)$ because if robot $i$ moves between $t_1$ and $t$, it becomes closer to its destination $D_i(t)$.
Thus:
\[ D_i(t) < \dfrac{U_i(t_1) + min(UD(t_0))}{2} \ldots (1) \] 

This means that $distance( min(UD(t_0)) , D_i(t)) < distance( D_i(t), U_1(t))$. Denote by $d$ the distance between $U_i(t_1)$ and $D_i(t)$. Note that $D_i(t) < U_i(t_1)$.\\

The computation of $D_i(t)$ by $i$ is based on the configuration of
the network as last seen by robot $i$. That is, 
the configuration of the system at the beginning of its cycle $P(t_1)$. This implies that:

\[ D_i(t) = center ( trim_{2f}^i (P(t_1))) \ldots (2) \] 

We prove that (1) and (2) combined lead to a contradiction.

The location dependency property of the trimming function implies that $U_i(t_1) \in trim_{2f}^i (P(t_1))$.

So, up to this point we proved that there exists a point $U_i(t_1) \in
trim_{2f}^i (P(t_1))$ 
such that $U_i(t_1)>D_i(t)$ and $distance(U_i(t_1), D_i(t))=d$.

But since by (2), $D_i(t)$ is the center of $trim_{2f}^i (P(t_1))$, 
there must exists another point $q \in trim_{2f}^i (P(t_1))$, 
such that $q < D_i(t)$ and $distance(q, D_i(t))=d$. 

But we observed from (1) that $distance(min(UD(t_0)), D_i(t_1))<d$,
which implies that 
$distance(min(UD(t_0)), D_i(t_1))<distance(q, D_i(t))$. This means
that $q<min(UD(t_0))$. 
But $q \in trim_{2f}^i (P(t_1))$, so $min(trim_{2f}^i (P(t_1))) < min(UD(t_0))$. 
This contradicts lemma \ref{trim_sub_range}, which proves the first part of our lemma.

\item \textbf{(2)} $D_i(t) \leq (U_i(t)+Max(UD(t_0))) / 2$ :
The property is symmetric to (1) and can be proved using the same argument.

\end{enumerate}
\end{proof}


Let $S$ be a subset of correct robots,  
and define $UD_S(t)$ to be the multiset of their positions and destinations at time $t$. 

\begin{lem}
\label{lem:Size_m_2f}
If $|S| \geq m-2f$ and there exists a time $t_1 \geq t_0$ such that for each $t>t_1$, $max(UD_S(t)) \leq max(UD(t_0))-b$, 
then all computed destinations by all correct robots in cycles that start after $t_1$ are $\leq max(UD(t_0))-b/2$.
\end{lem}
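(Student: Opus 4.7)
Plan. Fix any correct robot $i$ that starts a cycle at a look time $t_s > t_1$, and set $M = \max(UD(t_0))$. Its computed destination equals $D_i = center(trim_{2f}^i(P(t_s)))$, so to show $D_i \leq M - b/2$ it suffices to bound the two endpoints of the trimmed multiset $trim_{2f}^i(P(t_s))$ and then take their average.

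For the upper endpoint, I would first establish by a cautiousness-based induction on $t \geq t_0$ that $\max(U(t)) \leq M$ and $\max(D(t)) \leq M$: by Lemma~\ref{cautious-FS}, every newly computed destination lies in $range(U^i(t))$, and a correct robot's current position is always a convex combination of its last observed position and its last computed destination, so neither can ever cross $M$ as $t$ grows. Applying Lemma~\ref{trim_sub_range} at time $t_s$ then gives $\max(trim_{2f}^i(P(t_s))) \leq \max(U(t_s)) \leq M$.

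For the lower endpoint, the hypothesis tells me that at time $t_s > t_1$ every robot of $S$ sits at a position $\leq M - b$. Since $n \geq 5f+1$ and $m \geq n-f$ force $m \geq 4f+1$, we obtain $|S| \geq m - 2f \geq 2f + 1$, so at least $2f+1$ entries of the sorted snapshot $P(t_s)$ are $\leq M - b$; consequently $P_{2f+1}(t_s) \leq M - b$. By the formal definition of $trim_{2f}^i$ recalled in Section~\ref{sec:algorithm}, its minimum equals $\min(U_i(t_s), P_{2f+1}(t_s))$, which is in particular $\leq P_{2f+1}(t_s) \leq M - b$, irrespective of whether the caller $i$ itself belongs to $S$.

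Combining the two bounds, $D_i = (\min(trim_{2f}^i(P(t_s))) + \max(trim_{2f}^i(P(t_s))))/2 \leq ((M-b) + M)/2 = M - b/2$, and since this holds for every correct robot $i$ starting a cycle after $t_1$, the lemma follows. The only mildly delicate point is the induction propagating $\max(U(t) \cup D(t)) \leq M$ beyond $t_0$, which is a straightforward consequence of cautiousness; the counting step where the $n \geq 5f+1$ hypothesis is really spent is the one guaranteeing $|S| \geq 2f+1$, without which the bound on $P_{2f+1}(t_s)$ would not be available and the argument would collapse.
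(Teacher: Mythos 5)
Your proof is correct and follows essentially the same route as the paper's: bound $\min(trim_{2f}^i(P(t)))$ by $\max(UD(t_0))-b$ via the counting argument that $|S|\geq m-2f>2f$ positions lie below that threshold while the trim removes at most the $2f$ smallest, bound $\max(trim_{2f}^i(P(t)))$ by $\max(UD(t_0))$ via Lemma~\ref{trim_sub_range}, and average the two endpoints. Your explicit induction justifying $\max(U(t)\cup D(t))\leq\max(UD(t_0))$ for $t>t_0$ makes precise a step the paper leaves implicit, but it is not a different argument.
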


\begin{proof}
Let $i$ be any correct robot that computes its destination $D_i$ in a cycle started after $t_1$, say at $t$.
We prove in the following that $D_i < max(UD(t_0)) - b/2$:\\

First, observe that since $max(UD_S(t)) < max(UD(t_0))-b$ 
and $|S| \geq m-2f > 2f$, 
then 
\[min(Trim_{2f}^i(P(t)) < max(UD(t_0))-b\]. 

Otherwise, $min(Trim_{2f}^i(P(t))$ would be greater than all the positions in $S$ ($>2f$ positions), 
which contradicts the definition of $Trim_{2f}^i$ (at most the $2f$ smallest positions are removed).

According to lemma~\ref{trim_sub_range}, we have 
\[max(Trim_{2f}^i(P(t)) < max(UD(t_0))\]

But $D_i$ is the center of $trim_{2f}^i(P(t))$, which means that
$distance(D_i, min(trim_{2f}^i(P(t))))$ 
must be equal to $distance(D_i, max(trim_{2f}^i(P(t))))$. 
Hence, 

\[D_i < max(UD(t_0)) - b/2 \]

\end{proof}


\begin{lem}
\label{lem:dest_inside_S}
If $|S| \geq m-f$ and at some time $t_1 \geq t_0$, $max(UD_S(t_1)) \leq max(UD(t_0))-b$, 
then all computed destinations by all correct robots in cycles that
start after $t_1$ are less or equal to $max(UD(t_0))-b/2$.
\end{lem}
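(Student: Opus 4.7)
The plan is to reduce Lemma~\ref{lem:dest_inside_S} to the already-proved Lemma~\ref{lem:Size_m_2f}. Writing $M := \max(UD(t_0))$ for brevity, Lemma~\ref{lem:Size_m_2f} yields exactly the target conclusion once the bound $\max(UD_S(t)) \leq M - b$ is known for \emph{every} $t > t_1$, not just at $t_1$; and the cardinality hypothesis $|S| \geq m - f$ here trivially implies the weaker $|S| \geq m - 2f$ required by that lemma. The whole task thus boils down to upgrading the one-time premise at $t_1$ into an invariant valid at all later times.

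I would prove the invariant $\max(UD_S(t)) \leq M - b$ for $t \geq t_1$ by induction over the Look phases initiated by $S$-robots after $t_1$. Between two such events, each $j \in S$ only moves monotonically from its current position toward its current destination, both bounded by $M - b$ by the inductive hypothesis, so positions stay $\leq M - b$ and destinations are unchanged. The only non-trivial case is when some $j \in S$ starts a fresh cycle with Look time $t' > t_1$: one must show that the newly computed destination $D_j(t') = center(trim_{2f}^j(P(t')))$ still satisfies $D_j(t') \leq M - b$.

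Here the strengthened cardinality hypothesis $|S| \geq m - f$ is exactly what is needed. By the inductive hypothesis, at $t'$ each of the at least $m - f$ robots in $S$ has position $\leq M - b$. Since $m \geq n - f$, this gives at least $n - 2f$ positions in $P(t')$ that are $\leq M - b$, which forces $P_{n - 2f}(t') \leq M - b$; and since $m - f \geq 2f + 1$, it likewise forces $P_{2f + 1}(t') \leq M - b$. Combined with $P_j(t') \leq M - b$ (from $j \in S$), the explicit formulas give $\max(trim_{2f}^j(P(t'))) = \max(P_j(t'), P_{n-2f}(t')) \leq M - b$ and $\min(trim_{2f}^j(P(t'))) = \min(P_j(t'), P_{2f+1}(t')) \leq M - b$, so their midpoint $D_j(t')$ is also $\leq M - b$. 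Invoking Lemma~\ref{lem:Size_m_2f} then finishes the proof.

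The main obstacle I anticipate is formalizing the induction cleanly in the CORDA model, where a robot's Look phase may already have occurred before $t_1$ while its Compute happens after $t_1$, producing a fresh destination that was not directly constrained by the hypothesis at $t_1$. Such a pending destination is still controlled via Lemma~\ref{lem:dist-bords}: the Look-time position of the $S$-robot equals its position at $t_1$ (positions are frozen between Look and Compute), hence is $\leq M - b$, which forces the pending destination to be $\leq (U_j(\text{Look}) + M)/2 \leq M - b/2$; a routine case analysis then verifies that the inductive argument above goes through essentially unchanged.
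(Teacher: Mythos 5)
Your proof follows essentially the same route as the paper's: both establish the invariant that the positions and freshly computed destinations of the robots of $S$ stay at or below $\max(UD(t_0))-b$ for all $t>t_1$ --- the paper by a first-violator contradiction, you by induction over Look events, but the underlying counting is identical ($|S|\geq m-f\geq n-2f$ positions below the threshold force $P_{n-2f}(t)\leq \max(UD(t_0))-b$, hence the trimmed maximum and therefore the computed center as well) --- and then both conclude by invoking Lemma~\ref{lem:Size_m_2f}.

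The only real divergence is your final paragraph on pending cycles (Look before $t_1$, Compute after), a case the paper's proof silently ignores. There your fix is not quite enough: Lemma~\ref{lem:dist-bords} only bounds such a pending destination by $\max(UD(t_0))-b/2$, not by $\max(UD(t_0))-b$, so once that robot moves, the invariant you are inducting on degrades to $\max(UD_S(t))\leq \max(UD(t_0))-b/2$, and Lemma~\ref{lem:Size_m_2f} then yields only $\max(UD(t_0))-b/4$ for the destinations of the other correct robots --- weaker than the stated conclusion. Hence ``goes through essentially unchanged'' is optimistic: either the constant in the statement must be relaxed to $b/4$ (harmless for the downstream shrinking argument, where only the existence of some fixed fraction matters), or the hypothesis $\max(UD_S(t_1))\leq\max(UD(t_0))-b$ must be read as already covering destinations of cycles whose Look phase precedes $t_1$. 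This loose end is one you share with --- indeed inherit from --- the paper's own proof; your version at least makes it visible.
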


\begin{proof}
First, we prove that after $t_1$, the robots in $S$ remains always at positions $< max(UD(t_0))-b$, meaning that all their computed destinations after $t_1$ are $< max(UD(t_0))-b$.

Assume the contrary: Let $i$ be the first robot in $S$ that starts a cycle after $t_1$ such that its computed destination in this cycle is $> max(UD(t_0))-b$.
This implies that $max (trim_{2f}^i(P (t_1)))> max(UD(t_0))-b$, 
which means that at least $2f+1$ positions in the network at $t_1$ are strictly greater than $max(UD(t_0))-b$.

If we add to these $2f+1$ positions that are greater than
$max(UD(t_0))-b$, the $m-f \geq n-2f$ positions in $S$ that are less
or equal than $max(UD(t_0))-b$, we get a total number of robots in the network that is strictly greater than $n$, which leads to contradiction.
This proves that all positions and destinations of robots in $S$ after
$t_1$ are less than or equal to $max(UD(t_0))-b$. 
Thus by lemma \ref{lem:Size_m_2f}, the destinations computed by
\textit{all} correct robots in the network are less than or equal to $max(UD(t_0))-b/2$. 

\end{proof}


The next Lemma states that if some computed destination is located in
the neighborhood of one extreme of the network, then a majority of
correct robots (at least $m-2f$) are located in the neighborhood of this extreme.

\begin{lem}
\label{lem-majority}
Let $D_i$ be a destination point computed by a correct robot $i$ in a cycle started at time $t$. 
If $D_i<min(UD(t))+b$, then at least $m-2f$ correct robots are located at positions that are $<min(UD(t))+2b$ at $t$.
\end{lem}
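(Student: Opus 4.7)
The plan rests on two pillars: the defining formula $D_i = (min(trim_{2f}^i(P(t))) + max(trim_{2f}^i(P(t))))/2$ that expresses $D_i$ as the midpoint of the trimmed multiset, together with Lemma~\ref{trim_sub_range}, which provides the lower bound $min(trim_{2f}^i(P(t))) \geq min(U(t)) \geq min(UD(t))$ (the last inequality holding because $U(t) \subseteq UD(t)$).

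First, I would combine the hypothesis $D_i < min(UD(t)) + b$ with the lower bound on $min(trim_{2f}^i(P(t)))$ to control $max(trim_{2f}^i(P(t)))$. Rewriting $2D_i = min(trim) + max(trim)$ as $max(trim) = 2D_i - min(trim)$ and substituting yields $max(trim) < 2(min(UD(t))+b) - min(UD(t)) = min(UD(t))+2b$. Consequently, every element of the trimmed multiset is strictly less than $min(UD(t))+2b$.

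Second, I would propagate this bound to the positions of $P(t)$ that are discarded on the left by the trim operation. By definition of $trim_{2f}^i$, the retained elements form the contiguous block $\{P_{minindex_i}(t), \ldots, P_{maxindex_i}(t)\}$ in sorted order, and one always has $maxindex_i \geq n-2f$. Positions whose sorted index is $< minindex_i$ are, by construction, no larger than $min(trim)$, and therefore also $< min(UD(t))+2b$. Combining the two parts, \emph{every} position of $P(t)$ whose sorted index is $\leq maxindex_i$ lies strictly below $min(UD(t))+2b$, producing at least $n-2f$ such positions.

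Finally, since at most $n-m \leq f$ of these $n-2f$ positions belong to Byzantine robots, the number of \emph{correct} robots located at positions $< min(UD(t))+2b$ is at least $(n-2f)-(n-m) = m-2f$, which is exactly the claimed bound. The only mildly delicate step is the second one, namely reading off from the definition of $trim_{2f}^i$ that $maxindex_i \geq n-2f$ holds in every case and that the positions discarded on the left are majorized by $min(trim)$; the remainder is routine algebra.
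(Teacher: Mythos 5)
Your proof is correct and follows essentially the same route as the paper's: bound $\max(trim_{2f}^i(P(t)))$ by $\min(UD(t))+2b$ using the midpoint property of $center$ together with Lemma~\ref{trim_sub_range}, then count that at most $2f$ positions can exceed this threshold, leaving at least $m-2f$ correct robots below it. Your step spelling out why the positions discarded on the left by the trim are also majorized, and why $maxindex_i \geq n-2f$, only makes explicit what the paper leaves implicit.
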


\begin{proof}
The computation of $D_i$ is based on the configuration of the network as last seen by robot $i$, 
that is the configuration at the beginning of the cycle at $t$, $P(t)$. 
So we first prove that at $t$, $max(trim_{2f}^i(P(t))) < min(UD(t)) + 2b$:

By hypothesis, $D_i < min(UD(t))+b$. But according to lemma ~\ref{trim_sub_range} , $min(UD(t)) \leq min(trim_{2f}^i(P(t)))$. Thus, 
$D_i < min(trim_{2f}^i(P(t)))+b$. This means that
\[distance(D_i, min(trim_{2f}^i(P(t)))) < b \]

But $D_i$ is the center of $trim_{2f}^i(P(t))$, which means that
$distance(D_i, min(trim_{2f}^i(P(t))))$ 
must be equal to $distance(D_i, max(trim_{2f}^i(P(t))))$. 
Hence, 
\[ max(trim_{2f}^i(P(t)))) < D_i + b \]

But since by hypothesis $D_i <min(UD(t))+b$, we have 
\[max(trim_{2f}^i(P(t)))) < min(UD(t)) + 2b\]

This means that at most $2f$ positions (which may be correct) are $\geq min(UD(t)) + 2b$ at $t$. This completes the proof. 

\end{proof}


Let $U(t_0)$ and $D(t_0)$ be respectively the multisets of positions and destinations of correct robots at the initial time $t_0$, and define $UD(t_0)$ to be the union of $U(t_0)$ and $D(t_0)$. Take $b$ to be any distance $< diameter(UD(t_0))/4$, for example $b=diameter(UD(t_0))/10$. 

The next lemma states that if a correct robot elected at $t>t_0$ is located inside the range $(min(UD(t_0))+b, max(UD(t_0))-b)$, then the destinations points computed by correct robots after $t$ are either \textit{all} $\leq max(UD(t_0))-b/4$ or \textit{all} $\geq min(UD(t_0))+b/4$. This means that the election of a robot located inside $(min(UD(t_0))+b, max(UD(t_0))-b)$ is a sufficient condition to convergence.

\begin{lem}
\label{lem:dest_in_range}
Let $t_1$ be the first time at which all correct robots in the network executed a complete cycle at least once since $t_0$.

If a correct robot is elected at $t>t_1$ and is located inside $[min(UD(t_0))+b, max(UD(t_0))-b]$, then the destination points computed by correct robots in cycles that start after $t$ are either all located at positions $\leq max(UD(t_0))-b/4$ or all located at positions $\geq min(UD(t_0))+b/4$. 

\end{lem}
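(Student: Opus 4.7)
The plan is to reduce the lemma to a two-step ``halving'' argument: first Lemma~\ref{lem:dist-bords} halves $b$ to $b/2$ at the current time, and then Lemma~\ref{lem:dest_inside_S} halves it again to $b/4$ going forward. I split into two symmetric cases according to which disjunct of $elected()$ triggers for robot~$i$: Case~A is $U_i(t)\leq P_{f+1}(t)$ (yielding the $\geq \min(UD(t_0))+b/4$ conclusion), and Case~B is $U_i(t)\geq P_{n-f}(t)$ (yielding the symmetric $\leq \max(UD(t_0))-b/4$ conclusion). I treat Case~A in detail, Case~B being mirror-symmetric.

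In Case~A, the election condition combined with $U_i(t)\geq \min(UD(t_0))+b$ forces $P_{f+1}(t)\geq \min(UD(t_0))+b$, so at most $f$ positions in the whole network sit strictly below $\min(UD(t_0))+b$. In particular at most $f$ correct robots do, so the set $S$ of correct robots whose position at time $t$ is $\geq \min(UD(t_0))+b$ satisfies $|S|\geq m-f$; this is exactly the ``large stable set'' required by the hypothesis of Lemma~\ref{lem:dest_inside_S}.

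Next I would upgrade this positional bound into a bound on the full multiset $UD_S(t)$ by also controlling the destinations currently held by members of $S$. Here the hypothesis $t>t_1$ becomes essential: every correct robot has completed at least one cycle since $t_0$, so the destination $D_j(t)$ currently held by any $j\in S$ was computed in a cycle that started after $t_0$, and Lemma~\ref{lem:dist-bords} applies to yield
\[
D_j(t)\ \geq\ \tfrac{U_j(t)+\min(UD(t_0))}{2}\ \geq\ \min(UD(t_0))+b/2.
\]
Combined with the trivial positional bound $U_j(t)\geq \min(UD(t_0))+b$, this gives $\min(UD_S(t))\geq \min(UD(t_0))+b/2$. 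I would then invoke the mirror image of Lemma~\ref{lem:dest_inside_S} with parameter $b/2$ in place of $b$: since $|S|\geq m-f$ and $\min(UD_S(t))\geq \min(UD(t_0))+b/2$, every destination computed by any correct robot in a cycle starting after $t$ is $\geq \min(UD(t_0))+b/4$, as required.

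The main obstacle I anticipate is this second step: merely knowing that $m-f$ correct robots are \emph{located} in the upper zone is not enough to feed Lemma~\ref{lem:dest_inside_S}, which constrains positions \emph{and} destinations simultaneously. It is precisely the hypothesis $t>t_1$ coupled with Lemma~\ref{lem:dist-bords} that lets one also pin down the destinations held at time $t$, and the forced halving $b\to b/2$ it produces is exactly what leaves only $b/4$ rather than $b/2$ in the final bound.
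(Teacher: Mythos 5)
Your proof is correct and follows essentially the same route as the paper's: split on which disjunct of the election condition holds, extract a set $S$ of at least $m-f$ correct robots whose positions are at distance at least $b$ from the relevant extreme, use Lemma~\ref{lem:dist-bords} (valid because $t>t_1$ guarantees current cycles started after $t_0$) to push their destinations to distance $b/2$, and then apply Lemma~\ref{lem:dest_inside_S} with parameter $b/2$ to bound all subsequent destinations by $b/4$. The only cosmetic difference is that you derive $|S|\geq m-f$ directly from $P_{f+1}(t)\geq U_i(t)\geq \min(UD(t_0))+b$, whereas the paper routes this through Lemma~\ref{lem:i_inf_f+1} and takes $S$ to be the correct robots on one side of $U_i(t)$; both yield the same bound.
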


\begin{proof}


Let $i$ be a correct robot that is elected at time $t>t_1$ 
and whose position $U_i(t)$ is inside $[min(UD(t_0))+b, max(UD(t_0))-b]$.
According to lemma \ref{lem:i_inf_f+1}, either $U_i(t) \geq U_{m-f}(t)$ or $U_i(t) \leq U_{f+1}(t)$. 
Thus, we separate the analysis into two cases depending on the rank of the elected robot:

\begin{itemize}
	\item \textbf{Case 1: } $U_i(t) \geq U_{m-f}(t)$.

Define $S(t)$ to be the set of correct positions $\{U_1(t), ..., U_{m-f}(t), ..., U_i(t)\}$, 
and note that $|S(t)| \geq m-f$.

By hypothesis, $U_i(t) \leq max(UD(t))-b$ 
which implies that the positions of all robots in $S(t)$ are $ \leq min(UD(t))+b$. 
Thus by lemma \ref{lem:dist-bords}, the destinations of all robots in $S(t)$ are $ \leq max(UD(t_0))- b/2$. 
This means that $range_{S}(t)$, 
the range of positions and destinations of robots in $S(t)$ is such that at $t$, $max(range_{S}(t)) \leq max(UD(t_0))-b/2$. 
Hence, according to lemma \ref{lem:dest_inside_S},
all destinations points computed by correct robots in cycles that start after $t$ are $\leq max(UD(t_0))-b/4$.
\\
	\item \textbf{Case 2: } $U_i(t) \leq U_{f+1}(t)$.

The case is symmetric and we prove by a similar argument to \textbf{Case 1} that all destination points computed by correct robots are $\geq min(UD(t_0))+b/4$, 
which proves our lemma.
\end{itemize}
\end{proof}


\begin{lem}
\label{lem:shrinking}
Algorithm 1 is shrinking in CORDA model under a fully asynchronous scheduler when $n>5f$.
\end{lem}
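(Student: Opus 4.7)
The plan is to exhibit a constant $\alpha = 39/40 < 1$ that satisfies Definition~\ref{def:shrink}. Fix an arbitrary time $t_0$ and set $A=\min(UD(t_0))$, $B=\max(UD(t_0))$, $d=B-A$, and $b=d/10$; assume $d>0$ (otherwise there is nothing to prove). By cautiousness (Lemma~\ref{cautious-FS}) we have $U(t)\subseteq[A,B]$ for every $t\geq t_0$. By fairness of the scheduler there exists a time $t_1>t_0$ by which every correct robot has completed at least one full cycle since $t_0$.

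The proof is structured around producing one \emph{middle-band election}: a time $t_2\geq t_1$ at which some correct robot is elected while its position lies in $[A+b,B-b]$. Granting this event, Lemma~\ref{lem:dest_in_range} directly gives that every destination computed in a cycle started after $t_2$ lies entirely in $[A,B-b/4]$ or entirely in $[A+b/4,B]$. A second appeal to fairness, combined with the guaranteed per-activation move length $\delta=\min_i\delta_i>0$, then yields a time $t_3>t_2$ at which every correct robot's current position also lies in the same half-shrunken range. Therefore $diam(UD(t_3))\leq d-b/4 = (39/40)\,d$, as required.

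To produce the middle-band election I would track the two extreme correct positions $u_1(t)$ and $u_m(t)$. The robots realising them are always elected upon activation, because $u_1(t)\leq P_{f+1}(t)$ and $u_m(t)\geq P_{n-f}(t)$ hold at all times (as already used in the non-triviality argument of Lemma~\ref{cautious-FS}), and by fairness both are activated infinitely often. A direct computation on the trim, exploiting $n>5f$, shows that unless all correct robots coincide with $u_1(t)$ (resp.\ $u_m(t)$), the multiset $trim_{2f}^i(P(t))$ of the calling robot contains at least one correct position strictly inward, forcing the computed destination strictly inward and, by the move guarantee, advancing the robot's position inward by at least $\min(\delta,|D-U_i|)>0$ per activation. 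Combined with the bound $D_i\geq (U_i+A)/2$ on fresh destinations from Lemma~\ref{lem:dist-bords}, only finitely many activations are needed to lift $u_1$ past $A+b$: either $u_1\in[A+b,B-b]$ at that point (the next activation produces the middle-band election), or $u_1>B-b$ already, meaning that all correct positions have collapsed into $(B-b,B]$ and a direct application of Lemma~\ref{lem:dist-bords} confines all subsequent destinations to $[A+b/2,B]$, yielding a strictly stronger shrinking.

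The main obstacle is ruling out a Byzantine \emph{pinning} of the leftmost robot at $A$: by placing $f$ Byzantine robots at $A$ together with a mass of $n-3f$ correct robots at $A$, the adversary makes $P_{n-2f}=A$, so the trim for the robot at $u_1$ collapses onto $\{A\}$ and its computed destination is exactly $A$, blocking the lifting step above. I would dissolve this by observing that the single pool of $f$ Byzantines cannot pin both extremes simultaneously: a pinning at $A$ requires at least $n-3f>2f$ correct robots at $A$, leaving strictly fewer correct robots anywhere near $B$, so no analogous pinning of $u_m$ at $B$ is possible. The symmetric lifting argument applied to $u_m$ then either produces a middle-band election from the right, or collapses all correct positions into $[A,A+b]$; in both cases the diameter of $UD$ ends up shrunk by at least the factor $\alpha=39/40$.
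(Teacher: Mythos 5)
Your high-level target --- a ``middle-band election'' feeding Lemma~\ref{lem:dest_in_range}, with a fallback to a collapse at one extreme --- is reasonable, but the step that is supposed to produce that event, namely the ``lifting'' of $u_1$ past $A+b$ in finitely many activations, has a genuine gap. You only establish that, outside the exact-pinning configuration, the destination of the leftmost correct robot is \emph{strictly} inward; you have no uniform lower bound on $D_i-U_i$. The bound $D_i\geq (U_i+\min(UD(t_0)))/2$ from Lemma~\ref{lem:dist-bords} is of no help here: since $U_i\geq \min(UD(t_0))$, it is consistent with $D_i=U_i$. Because the per-activation progress is only $\min(\delta_i,\,|D_i-U_i|)$, the adversary --- helped by an unfavourable distribution of \emph{correct} robots, e.g.\ $2f$ correct robots plus $f$ Byzantine ones at $A$ and the nearest other correct robot at $A+\epsilon$ with $\epsilon$ small, which makes the leftmost robot's trimmed multiset $\{A,\dots,A,A+\epsilon\}$ and its destination $A+\epsilon/2$ --- can make $D_i-U_i$ vanish along the execution without ever reaching exact pinning, so $u_1(t)$ may be strictly increasing yet converge to a limit strictly below $A+b$. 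Your dichotomy ``lifts past $A+b$ in finitely many steps or is exactly pinned'' is therefore not established, and your patch ``the $f$ Byzantines cannot pin both extremes'' only excludes \emph{exact} pinning (and the pinning count is anyway achievable with $n-3f$ correct robots plus $f$ colocated Byzantines), not this slow-progress behaviour.

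The paper's proof is organised precisely so as never to have to force the extremal robots inward. It case-splits on whether some destination computed after $t_1$ falls within $d/10$ of an extreme; if one does, Lemma~\ref{lem-majority} yields a cluster $S$ of at least $m-2f$ correct robots near that extreme, and then either a low-rank elected robot eventually appears in the middle band (your scenario, handled by Lemma~\ref{lem:dest_in_range}) or it never does --- in which case $S$ stays bounded away from the \emph{opposite} extreme forever, and Lemmas~\ref{lem:Size_m_2f} and~\ref{lem:dest_inside_S} force \emph{all} subsequently computed destinations, hence eventually all positions, away from that opposite extreme. That opposite-side shrinking mechanism is exactly what replaces your unproven lifting claim, and it is absent from your proposal. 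To repair your argument you would need either a quantitative lower bound on the inward progress of the extremal robots (which the algorithm does not provide) or to import the paper's Lemma~\ref{lem:Size_m_2f} route for the case where an extremal cluster never dissolves.
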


\begin{proof}

Let $U(t_0)=\{U_1(t_0), U_m(t_0)\}$ be the configuration of correct robots at initial time $t_0$, 
and let $D(t_0)=\{D_1(t_0), ..., D_m(t_0)\}$ the multiset of their destinations at $t_0$.
Define $UD(t_0)$ to be the union of $U(t_0)$ and $D(t_0)$, 
and let $diam(t_0)$, the diameter at $t_0$, be equal to $max(UD(t_0)) - min(UD(t_0))$. 
$U(t)$, $D(t)$, $UD(t)$ and $diam(t)$ for each $t>t_0$ are defined similarly.

Let $t_1$ be the first time at which every correct robot in the network has executed a whole cycle at least once since $t_0$.
We consider the evolution of the network after $t_1$. 
The aim of this is to apply lemma \ref{lem:dist-bords}, that is, based only on the position of a correct robot, we can give bounds on its destination point which is especially interesting in the case of a robot executing a Move phase of its cycle.

We take into account all the computed destinations by correct robots after $t_1$ 
and we distinguish between two cases: 
(1) the case when all destinations computed after $t_1$ are inside $[min(UD(t_0))+\dfrac{diam(t_0)}{10}, max(UD(t_0))-\dfrac{diam(t_0)}{10}]$.
and (2) the case when a computed destination after $t_1$ lay outside this range.
We show that in both cases, there is a time at which the diameter of correct positions decreases by a factor of at least $39/40$.

\begin{itemize}

\item \textbf{Case 1:} All destinations computed by correct robots in cycles started after $t_1$ are inside the range

\begin{center}
 $[min(UD(t_0)) + diam(t_0)/10, max(UD(t_0))-diam(t_0)/10]$.
\end{center}

In this case, 
since each robot $i$ is guaranteed to move a minimal distance of $\delta_i$ before it can be stopped by the adversary, 
there is a time $t_2 \geq t_1$ when all correct robots are located inside $[min(UD(t_0))+diam(t_0)/10, max(UD(t_0))-diam(t_0)/10]$.
Thus $diam(t_2)=diam(t_0)* 4/5$, and by setting $\alpha=4/5$, our algorithm is shrinking.

\item \textbf{Case 2:} There is a destination $D_i$, computed by a correct robot $i$ in a cycle started after $t_1$, that is outside the range 

\begin{center}
 $[min(UD(t_0)) + diam(t_0)/10, max(UD(t_0))-diam(t_0)/10]$.
\end{center}

This means that either $D_i<min(UD(t_0)) + diam(t_0)/10$ or $D_i>max(UD(t_0))-diam(t_0)/10$. 
Since the two cases are symmetric, there is no loss of generality to assume that $D_i<min(UD(t_0)) + diam(t_0)/10$. 

The calculation of $D_i$ is based on the configuration of the network as seen by robot $i$ at the beginning of the cycle, say at $t_2$ (with $t_2 \geq t_1$).
Thus, according to lemma \ref{lem-majority}, at $t_2$, 
at least $m-2f$ correct robots are located at positions $<min(UD(t_0)) + diam(t_0)/5$. 
Denote by $S(t_2)$ the set of these robots.
By lemma \ref{lem:dist-bords}, the destinations of robots in $S(t_2)$ are $< min(UD(t_0)) + diam(t_0)*(3/5)$.
Thus, the positions and destinations of robots in $S(t_2)$ are $<max(UD(t_0)) - diam(t_0)*2/5$.

We now observe the positions of elected robots whose rank is $ \leq f+1$ and which are activated after $t_2$.
We separate the analysis into two subcases: 

	\begin{itemize}

	\item \textbf{Subcase 2A:} There is a time $t>t_2$ at which is elected a correct robot $i$ whose rank is $ \leq f+1$ and whose position $U_i(t)$ is $>min(UD(t_0)) + diam(t_0)/10$. 
Notice that since $|S(t_2)|>m-2f$, 
$U_i(t)$ is also $<max(UD(t_0)) - diam(t_0)*2/5$ which is the upper bound on the positions of robots in $S(t_2)$. 
Thus, $U_i(t) \in [min(UD(t_0))+diam(t_0)/10, max(UD(t_0))-diam(t_0)/10]$ and according to lemma \ref{lem:dest_in_range}, the diameter eventually decreases by a multiplicative factor of $1-1/40$. Hence, by setting $\alpha=39/40$ the lemma follows.

	\item \textbf{Subcase 2B:} All elected correct robots that are activated after $t_2$ and whose rank is $ \leq f+1$ are located at positions  $< min(UD(t_0)) + diam(t_0)/10$. 
This implies, according to lemma \ref{lem:dist-bords}, that the positions of these elected robots remain always at positions $<max(UD(t_0)) - diam(t_0)*9/20$. 
Thus, all robots in $S(t_2)$ remain always at positions $<max(UD(t_0)) - diam(t_0)*9/20$ ($\forall t>t_2$).
	
	According to lemma \ref{lem:Size_m_2f}, all destinations computed at cycle that start after $t_2$ are $<max(UD(t_0)) - diam(t_0)*9/40$. And since robots are guaranteed to move toward destinations by a minimum distance before they can be stopped by the adversary, they all end up located at positions $<max(UD(t_0)) - diam(t_0)*9/40$. Hence there is a time $t>t_2$ such that $diam(t)=diam(t_0)*(1-9/40)$. It suffices to set $\alpha=31/40$ and the lemma follows.
	
 	\end{itemize}
\end{itemize}
Consequently, we set $\alpha=39/40$ and the lemma is proved.
\end{proof}

\end{document}